\theoremstyle{plain}
\newtheorem{theorem}{Theorem}%
\newtheorem{prop}[theorem]{Proposition}%
\newtheorem{lemma}[theorem]{Lemma}%
\theoremstyle{definition}
\newtheorem*{example*}{Example}%
\newtheorem*{remark*}{Remark}%
\renewcommand\theta{\vartheta}
\newcommand{\R}{\ensuremath{\mathbb{R}}}%
\newcommand{\B}{\ensuremath{\mathcal{B}}}%
\newcommand{\cF}{\ensuremath{\mathcal{F}}}%
\newcommand{\cX}{\ensuremath{\mathcal{X}}}%
\newcommand{\cp}{\ensuremath{\alpha}}%
\newcommand{\cq}{\ensuremath{\beta}}%
\newcommand{\cP}{\ensuremath{\mathcal{P}}}%
\newcommand{\cW}{\ensuremath{\mathcal{W}}}%
\newcommand{\by}{\ensuremath{{\bf x}}}%
\newcommand{\s}{\ensuremath{{\bf s}}}%
\newcommand{\ind}{{\boldsymbol 1}}
\renewcommand\theta{\vartheta}
\newcommand{\Kommentar}[1]{%
        \ifthenelse{\boolean{comment}}{#1}{}
        }
\begin{document}
\bibliographystyle{natbib}
%
%
{\title{Weighted scoring rules and hypothesis testing}
\author{Hajo Holzmann \\
        \small{Fachbereich Mathematik und Informatik}  \\
        \small{Philipps-Universität Marburg} \\
        \small{holzmann@mathematik.uni-marburg.de}
    \and
        Bernhard Klar \\
        \small{Institut für Stochastik}  \\
        \small{Karlsruher Institut für Technologie (KIT)} \\
        \small{bernhard.klar@kit.edu} }
\maketitle
%

%
 }
\begin{abstract}
We discuss weighted scoring rules for forecast evaluation and their connection to hypothesis testing. First, a general construction principle  for strictly locally proper weighted scoring rules based on conditional densities and scoring rules for probability forecasts is proposed. We show how likelihood-based weighted scoring rules from the literature fit into this framework, and also introduce a weighted version of the Hyvärinen score, which is a local scoring rule in the sense that it only depends on the forecast density and its derivatives at the observation, and does not require evaluation of integrals. 	
Further, we discuss the relation to hypothesis testing. Using a weighted scoring rule introduces a censoring mechanism, in which the form of the density is irrelevant outside the region of interest. For the resulting testing problem with composite null - and alternative hypotheses, we construct optimal tests,  and identify the associated weighted scoring rule. As a practical consequence, using a weighted scoring rule allows to decide in favor of a forecast which is superior to a competing forecast on a region of interest, even though it may be inferior outside this region. A simulation study and an application to financial time-series data illustrate these findings.
\end{abstract}
%

%
%
%
\section{Introduction}
Generating and evaluating forecasts is a central task in many scientific disciplines such as makroeconomics, finance or climate and weather research.
While point forecasts for parameters such as the mean or a quantile are more frequently issued \citep{gneiting}, probabilistic forecasts for the whole predictive distribution are most informative and generally preferable \citep{dawid}.
Comparisons of distinct forecasts should be based on proper scoring rules \citep{gneitingraft}, which encourage the forecaster to be honest and make careful assessments according to her true believes.

Sometimes, interest focuses only on certain regions of the whole potential range of the outcome. As a consequence, forecasts should mainly or even exclusively be ranked according to their performance within these regions, while their performance outside is only of minor or no interest. One may think of inflation or GDP growth below a certain threshold, losses of a financial position above a certain threshold, or temperatures or the amount of rain below a certain threshold. Thus, regions of interest for real-valued quantities are often one-sided, and part of the tail of the distribution.

To accommodate such situations, \citet{giacomini} introduced a weighted version of the logarithmic score $ S(p,y) = - w(y)\, \log p(y)$, where $w(y)$ is the weight function such as $w(y) = 1_A(y)$ for some set $A$, and $p(y)$ is the forecast density. However, as observed in \citet{diks} and \citet{gneitingranj}, this is not a proper scoring rule. Indeed, it favors forecasts which put more mass into the region of interest than does the true conditional distribution.

As a remedy, \citet{diks} introduced the conditional and the censored likelihood rules, which depend on weight functions but are proper scoring rules, while
\citet{gneitingranj} introduced weighted versions of the continuous ranked probability score (CRPS). \citet{pelenis} introduces and discusses relevant theoretical properties of weighted scoring rules such as preference preservation, and proposes a penalized likelihood scoring rule as well as an incremental CRPS which satisfy these requirements.

In this paper we propose a general construction principle  for strictly locally proper weighted scoring rules based on conditional densities and on scoring rules for probability forecasts. We show how the likelihood-based weighted scoring rules from \citet{diks} and \citet{pelenis} fit into this framework and  how they are related. Further, we introduce a weighted version of the Hyvärinen score, which is a local scoring rule in the sense of \citet{Ehmgneiting} and \citet{parry} in
that it only depends on the forecast density and its derivatives at the observation, and does not require evaluation of integrals. 	

Our main contribution is to relate the use of weighted scoring rules to hypothesis testing.
Following the simulation setting in \citet{diks} and \citet{lerch}, we cast the Diebold-Mariano test \citep{dm, giacominowhite} into a stylized framework in which two densities are compared based on independent observations.
\citet{lerch} argue that if one density is the true data-generating density, the optimal test is given by the Neyman-Pearson test which corresponds to testing based on the ordinary, non-weighted logarithmic score. Further, in simulations they do not find systematic improvement when comparing two misspecified densities by weighted scoring rules, thus casting doubt on the general usefulness even of weighted scoring rules which are proper.

We argue that when using weight functions and hence focusing on a region of interest $A$, the aim is to ignore possible problems or advantages of the density forecast outside of $A$. Thus, even if a density forecast performs poorly on $A^c$ but well on $A$, it is useful to us if we only focus on the region $A$, indeed as useful as another density forecast which performs well overall.
Hence, using a weighted scoring rule introduces a censoring mechanism, in which the form of the density is irrelevant outside the region of interest. For the resulting testing problem with composite null - and alternative hypotheses, we construct optimal tests, for which the test statistic is the difference in values of the censored likelihood rule of \citet{diks}.

The paper is organised as follows. In Section \ref{sec:weightedscore} we present our theoretical results on the construction of weighted scoring rules. Section \ref{sec:testing} discusses the relation of using weighted scoring rules to hypothesis testing. Section \ref{sec:sims} illustrates the findings in a simulation study, while Section \ref{sec:app} gives an empirical application to financial time series data. Section \ref{sec:discuss} concludes. Proofs are deferred to Section \ref{sec:proofs}.


\section{Weighted scoring rules}\label{sec:weightedscore}

We shall work over an abstract measurable space $(\cX, \cF)$, but essentially think of $(\R, \B)$ or the multivariate situation $(\R^d, \B^d)$. Let $\mu$ be a $\sigma$-finite measure on $(\cX, \cF)$, e.g.~Lebesgue measure, and consider a set $\cP$ of probability densities w.r.t.~$\mu$ on $(\cX, \cF)$.

A {\sl scoring rule} is a map $S :   \cP \times \cX \to \R$, for which for every $p \in \cP$ the map $x \mapsto S(p,x)$ is quasi-integrable for every $q \in \cP$, and for which
\[ S(p,q) = \int_\cX\, S(p,x)\, q(x)\, d\mu(x) > -\infty \quad \text{and} \quad  S(q,q) \in \R \]
for every $p,q \in \cP$. A scoring rule is called {\sl proper} if
\begin{equation}\label{eq:proper}
 S(p,q) \geq S(q,q),\quad q,p \in \cP,
\end{equation}
and it is called {\sl strictly proper} if it is proper and if there is equality in (\ref{eq:proper}) if and only if $p=q$ $\mu$-almost everywhere.
Note the normalization: $S(p,x)$ denotes the loss, and we aim to minimize the expected loss.

We shall consider scoring rules which depend on weight functions, e.g.~measurable functions $w : \cX \to [0,1]$, and use notation and terminology which is closely related to that of \citet{pelenis}. Write $S(p,x;w)$, so that a {\sl weighted scoring rule} is a map $S: \cP \times \cX \times \cW \to \R$ such that $S(\cdot, \cdot;w)$ is a scoring rule for each $w \in \cW$, where $\cW$ is a set of weight functions.
The weighted scoring rule is called {\sl localizing} if
\begin{equation}\label{eq:locallyproper}
 S(h,x;w) = S(p,x;w)  \text{ for }\mu-\text{a.e. }x\in \cX \quad \text{if} \quad p=h \text{ on } \{w>0\} \ \mu-\text{a.e.},\ p,h \in \cP.
\end{equation}
Thus, a localizing weighted scoring rule only depends on the values of the forecast densities on the set $\{w>0\}$ for each $w\in \cW$. Integrating (\ref{eq:locallyproper}) we find that
\[
 S(h,q;w) = S(p,q;w)  \quad \text{if} \quad p=h \text{ on } \{w>0\} \ \mu-\text{a.e.},\ p,q,h \in \cP.
\]%
In particular, if the localizing weighted scoring rule $S$ is also proper, i.e.~$S(\cdot, \cdot ;w)$ is proper for each $w \in \cW$, then
\begin{equation}\label{eq:locallizing}
  S(h,q;w) \geq S(q,q;w) = S(p,q;w)  \quad \text{if} \quad p=q \text{ on } \{w>0\} \ \mu-\text{a.e.},\ p,q,h \in \cP.
\end{equation}
Further, a localizing proper weighted scoring rule is called {\sl strictly locally proper} if $S(p,q;w) = S(q,q;w)$ implies $p=q$ on $\{ w>0\}$ $\mu$-a.e., $p,q \in \cP$, and it is called {\sl proportionally locally proper} if $S(p,q;w) = S(q,q;w)$ if and only if $p= c\, q$ on $\{ w>0\}$ $\mu$-a.e., for some constant $c>0$ which depends on $p,q \in \cP$.

We shall assume that the class of densities $\cP$ and the class of weight functions $w \in \cW$ are such that
\[ \int_\cX p(x)\, w(x)\, d\mu(x) = : \int pw  >0.\]
%


For $p \in \cP$, $w \in \cW$ we let
\[ p_w(x) = \frac{w(x) \, p(x)}{\int\, w\, p}\]
denote the renormalized density of $p$ w.r.t.~$w$.
For formulating the next result, let $\tilde \cP$ be another class of densities such that $p_w \in \tilde \cP$ for every $w \in \cW$, $p \in \cP$.
We show how to construct proportionally locally proper weighted scoring rules from strictly proper scoring rules. \citet{gneiting}, Theorem 5, has a version of this result for scoring function for evaluating forecasts of certain parameters.
\begin{theorem}\label{lem:diks}
Let $S : \tilde \cP \times \cX \to \R$ be a proper scoring rule. Then
\[ \hat S: \cP \times \cX \times \cW \to \R,\quad \hat S(p,x;w) = w(x)\, S(p_w,x)\]
is a localizing proper weighted scoring rule. Further, if $S$ is strictly proper, then $\hat S$ is proportionally locally proper.
\end{theorem}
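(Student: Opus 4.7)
The plan is to reduce everything to properness of the underlying scoring rule $S$ on $\tilde \cP$ via the simple observation that $w(x)\, q(x) = c_q\, q_w(x)$ with $c_q := \int w q > 0$. Integrating the definition of $\hat S$ against $q \in \cP$ yields the key identity
\[
 \hat S(p,q;w) \;=\; c_q\, S(p_w, q_w), \qquad p,q \in \cP,
\]
so that the entire statement reduces to transferring properties of $S$ on the renormalized densities $p_w, q_w \in \tilde \cP$ back to $\hat S$ on $\cP$.

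For the localizing property, assume $p = h$ on $\{w > 0\}$ $\mu$-a.e. Then $\int w p = \int w h$, hence $p_w = h_w$ on $\{w > 0\}$ (and both vanish elsewhere), i.e.\ $p_w = h_w$ $\mu$-a.e. Therefore $S(p_w, x) = S(h_w, x)$ $\mu$-a.e., and multiplying by $w(x)$—which vanishes on $\{w = 0\}$—gives $\hat S(p, x; w) = \hat S(h, x; w)$ $\mu$-a.e., as required by (\ref{eq:locallyproper}). For properness, the key identity together with $S(p_w, q_w) \geq S(q_w, q_w)$ and $c_q > 0$ immediately delivers $\hat S(p,q;w) \geq \hat S(q,q;w)$.

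It remains to handle proportional local properness under strict properness of $S$. Equality $\hat S(p,q;w) = \hat S(q,q;w)$ is, by positivity of $c_q$, equivalent to $S(p_w, q_w) = S(q_w, q_w)$, which by strict properness forces $p_w = q_w$ $\mu$-a.e. Writing this out on $\{w > 0\}$ and cancelling $w(x)$ gives $p(x)/\!\int\! wp = q(x)/\!\int\! wq$, i.e.\ $p = c\, q$ on $\{w>0\}$ $\mu$-a.e.\ with $c = (\int wp)/(\int wq) > 0$. The converse direction is a direct substitution: if $p = c\, q$ on $\{w>0\}$, then $p_w = q_w$ $\mu$-a.e., and the key identity together with $S(p_w,q_w) = S(q_w,q_w)$ closes the loop.

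The only point that requires care—rather than being a genuine obstacle—is the bookkeeping between pointwise equality on $\{w > 0\}$ and $\mu$-a.e.\ equality of the renormalized densities as elements of $\tilde \cP$. The factor $w(x)$ in front of $S(p_w, x)$, and the absorption of the normalizing constants $\int w p$ into a multiplicative factor $c$, make this transition essentially automatic; the proportionality constant $c$ is precisely what prevents strict (rather than merely proportional) local properness from holding in this construction.
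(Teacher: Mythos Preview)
Your proof is correct and follows essentially the same route as the paper: both hinge on the identity $\hat S(p,q;w) = \big(\int qw\big)\, S(p_w,q_w)$, from which localization, properness, and the equivalence between $p_w=q_w$ and proportionality of $p,q$ on $\{w>0\}$ are read off. Your write-up is simply more explicit about the bookkeeping (in particular the two directions of the ``$p_w=q_w \Leftrightarrow p=c\,q$ on $\{w>0\}$'' equivalence) than the paper's terse version.
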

\begin{example*}
If applied to the logarithmic score $S_l(p,x) = - \log p(x)$, the theorem yields
\begin{align}\label{eq:clscoringrule}
\begin{split}
\hat S_l(p,x;w) & = - w(x)\, \log p_w(x)\\
 & = - w(x)\, \log p(x) + w(x)\, \log\big(\int p \, w \big) - w(x)\, \log w(x)\\
	& =  S^{CL}(p,x;w) - w(x)\, \log w(x),
\end{split}
\end{align}
thus the result is the conditional likelihood rule suggested by \citet{diks} up to a normalizing term which does not depend on the forecast density $p$. Here, we set $0\, \log(0) = 0\, \log(\infty) = 0$. \hfill $\diamond$
\end{example*}
It is remarkable that even though evaluation of the conditional likelihood rule $S^{CL}$ requires evaluation of the integral $\int p\, w$, which in case of $w = 1_A$ amounts to the probability $P(A)$ under $p$, this scoring rule is only proportionally locally proper and thus insensitive to this probability.
\begin{example*}
Localizing proper weighted scoring rules should not be confused with {\sl proper local scoring rules}, as investigated by \citet{Ehmgneiting} and \citet{parry}. As in these papers, consider first the real-valued situation. The dominating measure is the Lebesgue measure, and a scoring rule is local of order $k$ if $S(p,x)$ only depends on $p$ through the first $k$ derivatives $p(x), p'(x),\ldots, p^{(k)}(x)$ of $p$ at $x$.  \citet{parry} show existence of proper local scoring rules for any given even order, while proper scoring rules of odd order do not exist. \citet{Ehmgneiting} characterize proper local scoring rules of order two.

In order to apply Theorem \ref{lem:diks} to higher-order local scoring rules, we require the weight function to be sufficiently smooth as well. This excludes indicator functions, but allows for smooth approximations of indicators. Now, proper, higher-order local scoring rules can be computed without normalizing the density $p$, and thus we expect that the factor $\int p w$ in $\widehat S(p,x;w)$ in Theorem \ref{lem:diks} cancels out, leaving us again with a proper local scoring rule of the same order.

Let us illustrate this for the local scoring rule introduced by \citet{hyvaerinen},
\[ S(p,x) = 2 \, \frac{p''(x)}{p(x)}\, - \Big(\frac{p'(x)}{p(x)}\Big)^2\,.\]
Under conditions on the class of densities stated in \citet{Ehmgneiting} or \citet{hyvaerinen}, it is strictly proper. When applying Theorem \ref{lem:diks}, after canceling terms which only depend on $w$ but not on the forecast density $p$, and dividing by two, we get
\begin{equation}\label{eq:hyvaerinen}
\widehat S(p,x;w) = 2 \, \frac{p''(x)}{p(x)}\, w(x) - \Big(\frac{p'(x)}{p(x)}\Big)^2\, w(x) + 2\, \frac{p'(x)}{p(x)}\, w'(x),
\end{equation}
for which
\begin{align*}
\widehat S(p,q;w)-\widehat S(q,q;w) = \int\, \Big[\frac{p'}{p}- \frac{q'}{q} \Big]^2\, q\, w.
\end{align*}
The rule $\widehat S$ is also a local scoring rule of order two, and belongs to the class characterized by \citet{Ehmgneiting}. Indeed, in their Theorem 3.2, if we put $c=0$ and $K_0(x,y_1) = - w(x)\, y_1^2$, then
\[ s(x,y_0,y_1,y_2) = w(x)\, (y_1^2 + 2 y_2) + 2 w'(x) y_1.\]
Observing
\[ y_1 = \frac{p'}{p},\qquad y_2 = \frac{p''}{p} - \Big(\frac{p'}{p}\Big)^2\]
gives the weighted Hyvärinen score (\ref{eq:hyvaerinen}). Note that, as a locally proper weighted scoring rule, $\widehat S$ is also {\sl preference preserving} in the sense introduced by \citet{pelenis} since it depends linearly on the weight function $w$.

We can also obtain a multivariate version by applying Theorem \ref{lem:diks} to the multivariate score of \citet{hyvaerinen}. For $(\cX, \cF) = (\R^d, \B^d)$, using the notation
\[ \partial_i \, p(\by) : = \frac{\partial}{\partial\, y_i} p(\by), \qquad  \partial_i^2 \, p(\by) : = \Big(\frac{\partial}{\partial\, y_i}\Big)^2 p(\by), \quad i=1, \ldots, d,\]
we obtain
\[ \widehat S(p,\by;w) = \sum_{i=1}^d\, \Big[2 \, \frac{\partial_i^2 \, p(\by)}{p(\by)}\, w(\by) - \Big(\frac{\partial_i \, p(\by)}{p(\by)}\Big)^2\, w(\by) + 2\, \frac{\partial_i^2 \, p(\by)}{p(\by)}\, \partial_i w(\by)\Big]\]
after canceling terms not depending on $p$ and dividing by two, for which we have that
\begin{align*}
\widehat S(p,q;w) - \widehat S(q,q;w) = \int_{\R^d}\, \Big[ \sum_{i=1}^d \Big(\frac{\partial_i \, p(\by)}{p(\by)} - \frac{\partial_i \, q(\by)}{q(\by)} \Big)^2 \Big]\, q(\by)\, w(\by)\, d \by.
\end{align*}
The fact that $\widehat S(p,\by;w)$ does not depend on the normalization constant $\int p w$ may be useful in high-dimensional applications, where computation of $\int p w$ can be difficult. However, as a consequence,  $\widehat S(p,\by;w)$ is only proportionally locally proper and not strictly locally proper. \hfill $\diamond$
\end{example*}

A proportionally  locally proper weighted scoring rule can be turned into a strictly locally proper weighted scoring rule by adding a second weighted scoring rule which takes into account the normalization constant $\int pw$.

\begin{theorem}\label{th:basicresult}
Let $\s(\cp,z)$ be a strictly proper scoring rule for the success probability $\cp \in (0,1)$ of a binary outcome variable $z \in \{0,1\}$. Then
\[ S_\s(p,x;w) = w(x)\, \s\big(\int pw, 1 \big) + \big(1-w(x)\big)\, \s\big(\int pw, 0 \big)\]
is a localizing proper weighted scoring rule for the density forecast $p$. Further, if $S(p,x;w)$ is a  proportionally  locally proper weighted scoring rule, then
\[ \widehat S(p,x;w) = S_\s(p,x;w) + S(p,x;w)\]
is strictly locally proper.
\end{theorem}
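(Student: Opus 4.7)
The plan is to verify the four properties (localizing, proper, and then localizing plus strictly locally proper for the sum) essentially by direct computation, exploiting the fact that $S_\s$ depends on the forecast density $p$ only through the scalar $\int pw$.

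First, for $S_\s$ being localizing: the only way $p$ enters $S_\s(p,x;w)$ is through $\int pw$, and this integral takes the same value for any two densities that agree on $\{w>0\}$ $\mu$-a.e. Hence $S_\s(h,x;w) = S_\s(p,x;w)$ whenever $p=h$ on $\{w>0\}$, which is exactly the localizing condition (\ref{eq:locallyproper}). For propriety, I would integrate against $q$ to obtain, with $\alpha := \int pw$ and $\beta := \int qw$,
\[
S_\s(p,q;w) \;=\; \s(\alpha,1)\,\int wq \;+\; \s(\alpha,0)\,\int(1-w)q \;=\; \beta\,\s(\alpha,1) + (1-\beta)\,\s(\alpha,0).
\]
This is precisely the expected $\s$-score of the probabilistic binary forecast $\alpha$ when the true success probability is $\beta$. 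Strict propriety of $\s$ for binary outcomes then gives $S_\s(p,q;w) \geq S_\s(q,q;w)$, with equality if and only if $\alpha = \beta$, i.e.\ $\int pw = \int qw$.

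For the second claim, localization and properness of $\widehat S = S_\s + S$ are immediate, each being preserved under sums. The key point is strict local propriety. Suppose $\widehat S(p,q;w) = \widehat S(q,q;w)$. Since both $S_\s(p,q;w) \geq S_\s(q,q;w)$ and $S(p,q;w) \geq S(q,q;w)$ hold by the first part and by assumed propriety of $S$, equality of the sums forces equality in each summand. The equality $S_\s(p,q;w) = S_\s(q,q;w)$ yields $\int pw = \int qw$ by the strict-propriety argument above, while $S(p,q;w) = S(q,q;w)$ yields $p = c\,q$ on $\{w>0\}$ $\mu$-a.e.\ for some $c>0$, by the definition of proportional local propriety. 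Substituting the latter into the former gives $c\int qw = \int qw$, and since $\int qw > 0$ under the standing assumption on $\cW$ and $\cP$, we conclude $c=1$, hence $p=q$ on $\{w>0\}$ $\mu$-a.e.

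The only mildly delicate point is the rewriting of the expected score as the binary expected score with success probability $\beta = \int qw$; the rest is bookkeeping. The construction mirrors the familiar idea of combining a score for the conditional distribution on $\{w>0\}$ with a proper score for the mass placed on that set, so that together they pin down both the shape (up to a scalar, via $S$) and the normalization (via $S_\s$) of $p$ on $\{w>0\}$.
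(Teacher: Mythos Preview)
Your proof is correct and follows essentially the same route as the paper: show $S_\s$ is localizing because it depends on $p$ only via $\int pw$, establish propriety by rewriting $S_\s(p,q;w)$ as the expected binary score $\s(\int pw,\int qw)$, and then for $\widehat S$ use that equality in a sum of two nonnegative score differences forces equality in each, combining $\int pw=\int qw$ from $S_\s$ with $p=cq$ on $\{w>0\}$ from $S$ to get $c=1$. If anything, your argument is slightly more explicit than the paper's about why equality in the sum forces equality in each summand.
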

By Theorem \ref{th:basicresult}, there are various ways to turn a proportionally  locally proper weighted scoring rule such as the conditional likelihood rule  $S^{CL}$ into a strictly locally proper weighted scoring rule.
Let us illustrate the choices used in the literature to modify $S^{CL}$.
\begin{example*}
The scoring rule for a binary outcome defined by
\begin{equation}\label{eq:binaryscore}
 \bar \s(\cp,z) = - z \, \big(\log \cp + 1 \big) + \cp, \quad \alpha \in (0,1),
\end{equation}
is strictly proper. Indeed, letting
\[ \bar \s(\cp,\cq) = \cq \, \bar \s(\cp,1) + (1-\cq) \, \bar \s\big(\cp,0\big), \quad \cq \in (0,1),\]
we have that
\[ \bar \s(\cp,\cq) - \bar \s(\cq,\cq) = \cq\, \big(\frac{\cp}{\cq} - 1 - \log(\cp/\cq)  \big) \geq 0,\]
since $\log x \leq x-1$, with equality if and only if $x=1$, that is $\cp=\cq$. Moreover,
\[ S_{\bar \s}(p,y;w) =  -  w(y) \big( \log \int wp \big) - w(y) + \int w \, p,\]
and a simple computation shows that
\begin{align*}
S_{\bar \s}(p,x;w) + S^{Cl}(p,x;w)  = S^{PWL}(p,x;w)
\end{align*}
where \begin{align*}
S^{PWL}(p,x;w) = - w(x)\, \log p(x) - w(x) + \int p \, w,
\end{align*}
the penalized weighted likelihood rule by \citet{pelenis}.\hfill $\diamond$
\end{example*}
\begin{example*}
For the logarithmic scoring rule $ \s_l(\cp,z) = - z \log \cp - (1-z)\, \log (1-\cp)$ for a binary outcome we have that
\[ \s_l(\cp,z) = \bar \s(\cp,z) + \bar \s(1-\cp,1-z),\]
where $\bar s(\cp,z)$ is defined in (\ref{eq:binaryscore}),
and one obtains the censored likelihood rule of \citet{diks},
\begin{align}\label{eq:censlike}
S^{CSL}(p,x;w) & = - w(x)\, \log p(x) - \big(1-w(x)\big)\, \log\big(1-\int w p \big)\\
& = S^{CL}(p,x;w) + S_{\s_l}(p,x;w)\notag\\
& = S^{PWL}(p,x;w) + S_{\bar\s}(p,x;1-w).\notag
\end{align}
Thus, the penalized likelihood rule by \citet{pelenis} is ``between'' the conditional and the censored likelihood rules. It is already strictly locally proper, but places less weight on the normalization constant $\int p w$ than the censored likelihood rule. We shall further compare their behaviour in the simulation section. \hfill $\diamond$
\end{example*}
We conclude this section by discussing weighted versions of the continuous ranked probability score (CRPS).
\begin{example*}
\citet{gneitingranj} introduced weighted versions of the continuous ranked probability score (CRPS) for real-valued quantities, both for its representations in terms of probability forecasts and of quantile forecasts. Let
\[ F_p(x) = \int_{- \infty}^x\, p(t)\, dt\]
denote the distribution function associated with the Lebesgue density $p$. If $p$ has a finite first moment, the CRPS can be represented in terms of probability forecasts as
\[ CRPS(p,x) = \int_{- \infty}^{\infty} \big(F_p(z) - 1\{x \leq z\} \big)^2\, dz,\]
and the weighted version in \citet{gneitingranj} in terms of probability forecasts is
\begin{equation}\label{eq:weightedcrps}
 S(p,x;w) = \int_{- \infty}^{\infty} \big(F_p(z) - 1\{x \leq z\} \big)^2\, w(z)\, dz.
\end{equation}
The weighted version of the CRPS remains proper for every $w$. \citet{pelenis} shows that it is not a localizing weighted scoring rule if the class of weight functions contains indicators of compact intervals $w(x) = 1_{[a,b]}(x)$, $a <b$. However, we have the following result.
\begin{prop}\label{prop:wcrps}
For the class of one-sided weight functions
\[ \cW_{os} = \{ w(x) = 1_{[r, \infty)}(x),\ r \in \R\},\]
the weighted CRPS in (\ref{eq:weightedcrps}) is a localizing and strictly locally proper scoring rule.
\end{prop}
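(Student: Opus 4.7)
The plan is to substitute the specific form $w(z)=1_{[r,\infty)}(z)$ into the weighted CRPS, so that
\[ S(p,x;w) = \int_{r}^{\infty} \big(F_p(z) - 1\{x \leq z\} \big)^2\, dz, \]
and then verify each of the two required properties separately.

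For the localizing property, suppose $p=h$ on $\{w>0\}=[r,\infty)$ $\mu$-a.e. The integrand depends on the forecast density only through $F_p(z)$ for $z \geq r$, and this $x$-free part is the only bridge to the forecast. The key observation is that for $z \geq r$ I can rewrite $F_p(z) = 1 - \int_z^\infty p(t)\, dt$, which involves $p$ only on $[z,\infty) \subseteq [r,\infty)$. Hence $F_p(z) = F_h(z)$ for all $z \geq r$, which implies $S(p,x;w) = S(h,x;w)$ for every $x$, not only $\mu$-a.e.

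For strict local propriety, I would carry out the standard CRPS expansion: interchanging the order of integration (justified by Fubini since the integrand is nonnegative) and using $\int 1\{x\leq z\}\, q(x)\, dx = F_q(z)$ and $\int 1\{x\leq z\}^2\, q(x)\,dx = F_q(z)$, a direct computation yields
\[ S(p,q;w) - S(q,q;w) = \int_{r}^{\infty} \bigl(F_p(z) - F_q(z)\bigr)^2\, dz \;\geq\; 0. \]
This already gives propriety. For the strict local version, equality to zero forces $F_p(z) = F_q(z)$ for Lebesgue-a.e.\ $z\in[r,\infty)$; since $F_p$ and $F_q$ are absolutely continuous (densities w.r.t.\ Lebesgue measure), they agree on all of $[r,\infty)$. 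Differentiating, or equivalently noting that $\int_{z_0}^{z_1} p\, dt = \int_{z_0}^{z_1} q\, dt$ for every $r\leq z_0<z_1$, gives $p=q$ Lebesgue-a.e.\ on $\{w>0\}$.

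The only conceptual point to watch is the localizing step: one might worry because $F_p(z)$ nominally depends on the values of $p$ on $(-\infty, z)$, which includes the set $\{w=0\}$. The resolution, i.e.\ passing to the complementary integral $1 - \int_z^\infty p\, dt$, is what makes one-sided right-tail weights work where two-sided weights such as $1_{[a,b]}$ do not. Apart from this observation, the argument is essentially a direct computation, and no part of it should present real difficulty.
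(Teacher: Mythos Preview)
Your proof is correct and follows essentially the same route as the paper: both establish the localizing property via the identity $F_p(z) = 1 - \int_z^\infty p(t)\,dt$ for $z \geq r$, and both verify strict local propriety through the computation $S(p,q;w) - S(q,q;w) = \int_r^\infty (F_p(z)-F_q(z))^2\,dz$. Your version is slightly more explicit in deducing $p=q$ a.e.\ on $[r,\infty)$ from $F_p = F_q$ on $[r,\infty)$, which the paper leaves implicit.
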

\citet{pelenis} proposes a variant of the weighted CRPS called  incremental CRPS. When well-defined, it is localizing and actually strictly locally proper. However, the defining integral is infinite for one-sided weight functions $1_{[r, \infty)}$.
Therefore, in this paper we focus on the weighted CRPS from \citet{gneitingranj} as defined in (\ref{eq:weightedcrps}). \\

In terms of quantile forecasts, the CRPS is given by
\[ CRPS(p,x) = \int_0^1 QS_\alpha\big(F_p^{-1}(\alpha),x\big)\, \mathrm{d}\alpha,\quad QS_\alpha(q,x) = 2\, \big(1_{x <q} - \alpha \big)(q-x),\]
and where $F_p^{-1}$ is the quantile function of $F_p$. For a weight function $v: (0,1) \to [0, 1]$, \citet{gneitingranj} define the quantile-weighted version of the CRPS as
\[ QCRPS(p,x;v) = \int_0^1 QS_\alpha\big(F_p^{-1}(\alpha),x\big)\, v(\alpha)\, \mathrm{d}\alpha.\]
This is not a weighted scoring rule and hence cannot be localizing in the sense of this paper, since the weight function is not defined on sample space $\R$ but rather on $(0,1)$. However, it satisfies another interesting property. Assume that the distribution functions are strictly increasing on their support, so that quantiles are unique and the quantile curve is continuous. If we choose $v(\alpha) = 1_{[r,1)}(\alpha)$, $r \in (0,1)$, then $QCRPS(p,q;v) = QCRPS(q,q;v)$ if and only if $F_p^{-1}(\alpha) = F_q^{-1}(\alpha)$ for all $\alpha \in [r, 1)$. Equivalently, $F_p^{-1}(r) = F_q^{-1}(r)$, and $p=q$ Lebesgue-a.e.~on $\big[F_p^{-1}(r), \infty)$. Thus, the quantile-weighted CRPS evaluates the forecast density $p$ on a forecast-dependent region of interest. \hfill $\diamond$\\

\end{example*}

\section{Relation to hypothesis testing}\label{sec:testing}

Following \citet{lerch} we interprete weighted scoring rules from the perspective of hypothesis testing.
As in the simulations in \citet{diks}, they cast the Diebold-Mariano test \citep{dm, giacominowhite} into a framework in which two densities are compared based on independent observations.
\citet{lerch} argue that if one density is the true data-generating density, the optimal test is given by the Neyman-Pearson test which corresponds to testing based on the ordinary logarithmic score. Thus, improvement by using weighted scoring rules can only be expected when comparing two misspecified densities. However, in their simulations they find no systematic improvement.
Similarly, when comparing a t-distribution with a normal distribution, neither do the simulations in \citet{diks} make a clear case for the use of weighted scoring rules.

In this section we focus on weight functions of the form $w(x) = 1_A(x)$, and argue that when using this weight function, the aim is to ignore possible problems or advantages of the density forecast on the set $A^c$. Thus, even if a density forecast $p$ performs poorly on $A^c$ but well on $A$, it is useful to us if we only focus on the region $A$, indeed as useful as another density forecast which performs well overall. Further, if the focus is on the region $A$, such a forecast $p$ is to be preferred to a forecast which performs well on $A^c$ but poorly on $A$.
This use of weighted scoring rules is not brought to light in the simulations of \citet{lerch}: In their setting, interest focuses on the right tail but all density forecasts compared are correctly specified in the left tail, and ignoring that region does not result in an increased power.
In \citet{diks}, roughly speaking the deviation from the true density is equal on $A$ and on $A^c$ due to the symmetry of the t- and the normal distributions. Thus focusing on one tail will not  give an improvement.

\medskip

In this section we formalize in a stylized setting with one-sided tests what can be achieved by using weighted scoring rules. The simulations in the next section complement our theoretical analysis. Again we work on an abstract measurable space $(\cX, \cF)$, and consider a region of interest $A \in \cF$, corresponding to the weight function $w(x) = 1_A(x)$. Let $p_0$, $p_1$ be two competing densities w.r.t.~the dominating measure $\mu$, and let $P_0$, $P_1$ denote the associated probability measures. Assume that $0 < P_0(A), P_1(A) < 1$.
Since the property (\ref{eq:locallyproper}) of localizing weighted scoring rules implies that the forecasts are only relevant through their values on $A$, testing using score differences with weight function $w(x) = 1_A(x)$ amounts to testing
\begin{equation}\label{eq:testingproblem}
 H_0: p\, 1_A  = p_0\, 1_A \ \mu-\text{a.e.}\quad \text{vs.} \quad H_1: p\, 1_A  = p_1\, 1_A \ \mu-\text{a.e.}
\end{equation}
Here, we have a composite null hypothesis and a composite alternative hypothesis, and for the formulation of $H_0$ only the restriction of $p_0$ to $A$ is relevant. Note that this includes the probabilities $P_0(A^c) = 1-P_0(A)$ and $P_1(A^c)$, but not the shapes of $p_0$ and $p_1$ on $A^c$.
Hence, for a candidate density $p$, only the shape on $A$ and the probabilities $P(A)$ and hence $P(A^c)$ matter for the test decision.

Consider (\ref{eq:testingproblem}) on the basis of repeated observations, e.g.~on
$ \big(\cX^n,\mu^{\otimes n}\big)$,
and product densities $\prod_{k=1}^n p(x_k)$, where $p$ satisfies the restrictions under $H_0$ or $H_1$. For a weighted scoring rule $S$ let
\begin{equation}\label{eq:scorediff}
 T(x_1, \ldots, x_n) = \sum_{k=1}^n \big(S(p_0,x_k;1_A) - S(p_1,x_k;1_A) \big)
\end{equation}
denote the sum of the score differences. Given $\alpha \in (0,1)$ let $c_\alpha$ denote the $1-\alpha$-quantile of $T$ under $P_0^{\otimes n}$, and consider the test
\begin{equation}\label{eq:thescoretest}
\phi(x_1, \ldots, x_n) = \left\{ \begin{array}{cc} 1,& T(x_1, \ldots, x_n) > c_\alpha,\\
\frac{P_0^{\otimes n}(T \leq c_\alpha) - (1-\alpha)}{P_0^{\otimes n}(T \leq c_\alpha) - P_0^{\otimes n}(T < c_\alpha)},& T(x_1, \ldots, x_n) = c_\alpha,\\
0,& T(x_1, \ldots, x_n) < c_\alpha.\\
\end{array}\right.
\end{equation}
In the next result we relate localizing weighted scoring rules to the testing problem (\ref{eq:testingproblem}) and determine the optimal choice of the scoring rule, which turns out to be the censored likelihood rule (\ref{eq:censlike}) as proposed in \cite{diks}.

\begin{theorem}\label{th:scoretesting}
1. Consider the testing problem (\ref{eq:testingproblem}). Given $\alpha \in (0,1)$, if $S$ is a localizing weighted scoring rule, then the test $\phi$ in (\ref{eq:thescoretest}) for which $T$ in (\ref{eq:scorediff}) is based on $S$, has a constant power $\alpha$ on $H_0$, and a constant power on $H_1$.

2. Given $\alpha \in (0,1)$, the test $\phi_{cen}$ using the censored likelihood rule (\ref{eq:censlike}) is uniformly more powerful then tests of the form (\ref{eq:thescoretest}) based on other localizing weighted scoring rules.
\end{theorem}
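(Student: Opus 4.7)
The plan is to reduce the composite testing problem~(\ref{eq:testingproblem}) to a simple-vs-simple Neyman--Pearson problem on ``censored'' observations and to identify $\phi_{cen}$ with the resulting likelihood-ratio test. I would introduce the censored random variable $Y(x)=(1_A(x),\,x\,1_A(x))$, which retains the full value of $x$ on $A$ but maps all of $A^c$ to a single auxiliary symbol $\star$. Under any forecast density $p$, the law of $Y$ is the mixture $P^\star$ with density $p(y)\,1_A(y)$ on $A$ and point mass $1-P(A)$ at $\star$, and so it is determined by $p\,1_A$ alone. Consequently, under any density satisfying $H_0$ the sample $(Y_1,\ldots,Y_n)$ has distribution $P_0^{\star\otimes n}$ (with $P_0^\star$ built from $p_0$), and analogously it has distribution $P_1^{\star\otimes n}$ under any density in $H_1$; so in the $Y$-world, (\ref{eq:testingproblem}) becomes a simple-vs-simple problem.

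A direct computation then gives
\begin{align*}
 \log\frac{dP_1^\star}{dP_0^\star}\bigl(Y(x)\bigr) &= 1_A(x)\,\log\frac{p_1(x)}{p_0(x)} + 1_{A^c}(x)\,\log\frac{1-P_1(A)}{1-P_0(A)} \\
 &= S^{CSL}(p_0,x;1_A)-S^{CSL}(p_1,x;1_A),
\end{align*}
so the statistic $T_{cen}$ of~(\ref{eq:scorediff}), built from the censored likelihood rule, is exactly the log-likelihood-ratio statistic for $P_0^{\star\otimes n}$ versus $P_1^{\star\otimes n}$. The randomized test $\phi_{cen}$ defined in~(\ref{eq:thescoretest}) therefore coincides with the Neyman--Pearson test at level $\alpha$ for this reduced problem, and the Neyman--Pearson lemma guarantees that it is uniformly most powerful at level $\alpha$ within the class of all level-$\alpha$ tests that are measurable with respect to $(Y_1,\ldots,Y_n)$.

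It remains to show that every competing test $\phi_S$ arising from another localizing weighted scoring rule $S$ belongs to this class. Writing $f_S(x)=S(p_0,x;1_A)-S(p_1,x;1_A)$, part~1 of the theorem implies that $\phi_S$ has constant level $\alpha$ on $H_0$, that is, under every $p^{\otimes n}$ with $p\,1_A=p_0\,1_A$. If $f_S$ were not $\mu$-almost-everywhere constant on $A^c$, then perturbing $p|_{A^c}$ would alter the law of $\sum_k f_S(x_k)$ under $p^{\otimes n}$ and hence the rejection probability, contradicting that constancy. Thus $f_S(x)$ is a measurable function of $Y(x)$ alone, $T_S$ is a measurable function of $(Y_1,\ldots,Y_n)$, and the Neyman--Pearson bound yields $E_{P_1^\star}[\phi_S]\le E_{P_1^\star}[\phi_{cen}]$. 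By the constancy of power on $H_1$ from part~1, this transfers to $E_p[\phi_S]\le E_p[\phi_{cen}]$ uniformly over $p$ satisfying $H_1$, which is the claimed uniform dominance. The main obstacle is precisely this last reduction step: the definition of ``localizing'' in~(\ref{eq:locallyproper}) does not itself forbid $S(p,x;1_A)$ from depending nontrivially on $x\in A^c$, so the argument must extract the required ``censored'' form of $f_S$ from the similarity structure on $H_0$ supplied by part~1.
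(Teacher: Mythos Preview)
Your overall strategy—censoring the observations so that~(\ref{eq:testingproblem}) becomes a simple-vs-simple problem, and identifying $\phi_{cen}$ with the resulting Neyman--Pearson test—is exactly the route the paper takes. The paper formalises the censoring via the space $\tilde\cX=A\cup\{\bar a\}$ and a lemma (Lemma~\ref{lem:mainlemtesting}) that puts tests which are constant in their $A^c$-arguments in one-to-one correspondence with tests on $\tilde\cX^n$; your $Y$-variable construction is equivalent, and the identification of $T_{cen}$ with the log-likelihood ratio on $\tilde\cX^n$ is the content of Lemma~\ref{lem:mainlemtesting}, part~2.

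You are right to flag the ``main obstacle'': the localizing property~(\ref{eq:locallyproper}) restricts only how $S(p,x;1_A)$ depends on the forecast $p$, not how it depends on $x$, so it does not in itself force the score difference $f_S(x)=S(p_0,x;1_A)-S(p_1,x;1_A)$ to be $\mu$-a.e.\ constant on $A^c$. (A toy example: $S(p,x;1_A)=1_{A^c}(x)\,P(A)\,g(x)$ with nonconstant $g$ is localizing, but $f_S$ varies on $A^c$ whenever $P_0(A)\neq P_1(A)$.) The paper's own proof disposes of this step by asserting, without further argument, that~(\ref{eq:locallyproper}) implies the hypothesis of Lemma~\ref{lem:mainlemtesting}, part~3; so the issue you raise applies equally to the paper's proof.

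Your proposed remedy, however, is circular. You invoke part~1 to obtain constant rejection probability across $H_0$, and then use that constancy to force $f_S$ to be constant on $A^c$. But part~1 has not been established independently: its proof needs exactly the same constancy (if $f_S$ varies on $A^c$, the law of $T$ under $p^{\otimes n}$ genuinely changes with $p|_{A^c}$, so part~1 would fail). Also, the step ``changing the law of $T$ changes the rejection probability'' is not automatic, since only a single tail probability is at stake. A cleaner resolution is to note that every concrete localizing rule in the paper—those built via Theorems~\ref{lem:diks} and~\ref{th:basicresult}, and the one-sided twCRPS of Proposition~\ref{prop:wcrps}—satisfies the stronger property that $x\mapsto S(p,x;1_A)$ is itself constant on $A^c$, from which both parts of the theorem follow directly along the lines you and the paper describe. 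Alternatively, this could be added as an explicit hypothesis.
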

Without using weighted and localizing scoring rules, score differences $T$ do not easily lead to tests for the testing problem (\ref{eq:testingproblem}). The critical value for the rejection region would need to be computed as a supremum of $1-\alpha$-quantiles of $T$ under all densities $p$ satisfying $H_0$, which is generally infeasible.

Finally, we show that $\phi_{cen}$ is a minimax test in the class of all level-$\alpha$ tests for (\ref{eq:testingproblem}) in the sense that it maximizes the minimal power. Further, in case of a single observation, $\phi_{cen}$ is even uniformly most powerful in this general class of tests. Define the minimal power of a test $\phi: \cX^n \to [0,1]$ as
\begin{equation}\label{eq:power}
\rho(\phi) =  \inf_{p:\ p\, 1_A  = p_1\, 1_A} \int_\cX \cdots \int_\cX \,  \phi(x_1, \ldots x_n)\, \prod_{k=1}^n p(x_k)\, d \mu(x_1)\ldots d \mu(x_n)
\end{equation}
\begin{theorem}\label{th:optimalitytest}
Given $\alpha \in (0,1)$, let $\phi: \cX \to [0,1]$ be any level-$\alpha$-test for (\ref{eq:testingproblem}), and let $\phi_{cen}$ be the test in (\ref{eq:thescoretest}) of level $\alpha$ when using the censored likelihood rule (\ref{eq:censlike}). Then $\rho(\phi_{cen}) \geq \rho(\phi)$. In case of a single observation $n=1$, the test $\phi_{cen}$ is even uniformly most powerful.
\end{theorem}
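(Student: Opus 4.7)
The plan is to reduce the composite-versus-composite testing problem (\ref{eq:testingproblem}) to a family of simple-versus-simple problems, one per alternative, and then invoke Neyman--Pearson. Given an arbitrary $p_1'$ with $p_1'\,1_A=p_1\,1_A$, I would pair it with the ``matched'' simple null
\[
  p_0^\star(x) \;=\; 1_A(x)\,p_0(x) \;+\; 1_{A^c}(x)\,\frac{P_0(A^c)}{P_1(A^c)}\,p_1'(x),
\]
which integrates to one and coincides with $p_0$ on $A$, so $p_0^\star\in H_0$. The decisive calculation is that the $p_1'$-factor on $A^c$ cancels in the likelihood ratio, leaving
\[
  \log\prod_{k=1}^n\frac{p_1'(x_k)}{p_0^\star(x_k)} \;=\; \sum_{k=1}^n\Big[1_A(x_k)\log\tfrac{p_1(x_k)}{p_0(x_k)} \;+\; 1_{A^c}(x_k)\log\tfrac{P_1(A^c)}{P_0(A^c)}\Big],
\]
which by (\ref{eq:censlike}) is precisely the score-difference statistic $T$ in (\ref{eq:scorediff}) for the censored likelihood rule.

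Next I would use the localizing property to check that $T$ has the same distribution under $P_0^{\star,\otimes n}$ as under $P_0^{\otimes n}$: each summand of $T$ depends on $x_k$ only through $1_A(x_k)$ and, on $A$, through $p_1/p_0$, while $p_0^\star$ and $p_0$ agree on $A$ and share the same mass on $A^c$. Hence the critical value $c_\alpha$ built into $\phi_{cen}$ via $P_0^{\otimes n}$ is also the Neyman--Pearson cut-off under $P_0^{\star,\otimes n}$, and the randomization prescribed in (\ref{eq:thescoretest}) is exactly the Neyman--Pearson randomization on the boundary. Consequently $\phi_{cen}$ \emph{is} the Neyman--Pearson test for the simple pair $(p_0^{\star,\otimes n},\,p_1'^{\otimes n})$.

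Since any level-$\alpha$ test $\phi$ for the composite $H_0$ satisfies $E_{p_0^{\star,\otimes n}}\phi\le\alpha$, the Neyman--Pearson lemma then yields $E_{p_1'^{\otimes n}}\phi\le E_{p_1'^{\otimes n}}\phi_{cen}$. By Theorem \ref{th:scoretesting} the right-hand side is constant in $p_1'\in H_1$ and thus equals $\rho(\phi_{cen})$; taking the infimum over $p_1'$ on the left gives $\rho(\phi)\le\rho(\phi_{cen})$, which is the minimax statement. For $n=1$ the pointwise bound $E_{p_1'}\phi\le E_{p_1'}\phi_{cen}$ is by definition the UMP property.

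The main obstacle is the first step: recognizing that by coupling $p_0^\star$ and $p_1'$ to share a common shape on $A^c$ (up to the mass factor), the censored-likelihood statistic becomes an \emph{honest} log-likelihood ratio, so that Neyman--Pearson applies off the shelf. Once this matched pair is in place, the rest is routine bookkeeping; the localizing property of the censored likelihood rule does all the remaining work, both for identifying the critical value under $p_0^\star$ with the one computed under $p_0$, and for supplying the constant-power fact that upgrades a pointwise NP bound to a minimax conclusion.
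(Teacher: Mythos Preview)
Your argument is correct and takes a genuinely different---and in fact sharper---route than the paper. The paper works through a censoring map to $\tilde{\cX}=A\cup\{\bar a\}$ (Lemma~\ref{lem:mainlemtesting}): for the minimax claim it averages an arbitrary test $\phi$ over $A^c$ to produce a test $\bar\phi$ with constant power on $H_1$, then compares $\bar\phi$ on $\tilde{\cX}^n$ to the Neyman--Pearson test there; for the UMP claim at $n=1$ it invokes a separate device (Lemma~\ref{lem:npcensored}) replacing $\phi$ on $A^c$ by a suitable constant. Your matched-null construction $p_0^\star$ bypasses both steps: by forcing $p_0^\star$ to share the shape of the given alternative $p_1'$ on $A^c$ (rescaled by $P_0(A^c)/P_1(A^c)$), the censored-likelihood statistic $T$ becomes an exact log-likelihood ratio for the simple pair $(p_0^{\star,\otimes n},p_1'^{\otimes n})$, and the localizing property pins the size under $p_0^\star$ at $\alpha$. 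Neyman--Pearson then delivers the pointwise inequality $E_{p_1'^{\otimes n}}\phi\le E_{p_1'^{\otimes n}}\phi_{cen}$ directly.

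Observe that your pointwise bound is established for \emph{every} $n$, not only $n=1$: nothing in the matched-null argument is special to a single observation. Your proof therefore shows that $\phi_{cen}$ is uniformly most powerful for all $n$, strictly more than Theorem~\ref{th:optimalitytest} asserts. The paper remarks just after the theorem that for $n>1$ the authors ``were not able to clarify whether $\phi_{cen}$ is uniformly most powerful in a larger class of tests''; your construction resolves that question in the affirmative, and your final sentence restricting the UMP conclusion to $n=1$ undersells what you have actually proved.
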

For $n>1$ we were not able to clarify whether $\phi_{cen}$ is uniformly most powerful in a larger calls of tests than those given in (\ref{eq:thescoretest}).

\section{Simulation} \label{sec:sims}

In this section, we consider simulation settings similar to those in \cite{diks} and \cite{lerch}.
Suppose that at time $t = 1,\ldots,n$, the observations $y_t$ are independent standard normally distributed. We apply the two-sided Diebold-Mariano test of equal predictive performance, nominal level $\alpha=0.05$, using the variance estimate in display (2.17) of \cite{lerch} with $k=1$. As nonparametric alternative, we apply the one-sided Wilcoxon signed-rank test, nominal level $\alpha=0.025$.

\begin{table}
\centering{
\begin{tabular}{cc|ccccc}
\hline
&scoring rule & proper & strictly & localizing & strictly  & proportionally \\
&                  &            & proper &               & locally proper & locally proper \\  \hline
&CRPS          & yes      & yes      & no & - & - \\
unweighted &LogS           & yes      & yes      & no & - & - \\
&HY              & yes      & yes      & no & - & - \\ \hline
&twCRPS      & yes      & -          & no (yes)  & no (yes)  & no \\
&CSL            & yes      & -          & yes & yes & no \\
weighted &CL              & yes      & -          & yes & no  & yes \\
&PWL           & yes       & -          & yes & yes & no \\
&WH            & yes       & -          & yes & no  & yes \\ \hline
\end{tabular}
\caption{\label{table1} Summary of properties of unweighted and weighted scoring rules. The entry {\it no (yes)} for twCRPS indicates that it is localizing and strictly locally proper for the one-sided weight functions used in the simulations, but not in general.}
}
\end{table}

In the simulations, we use  the logarithmic score (LogS), the continuous ranked probability score (CRPS) and the Hyvärinen score (HY) as typical examples of unweighted scoring rules. As weighted scoring rules, we apply the threshold weighted continuous ranked probability score (twCRPS), the censored likelihood rule (CSL), the conditional likelihood rule (CL), the penalized weighted likelihood rule (PWL) and the weighted Hyvärinen score (WH). Table \ref{table1} gives a summary of the properties of these scoring rules.

Suppose that we are only interested in the forecast quality on a subset of the support of the underlying distribution. For example, interest may center on the positive real numbers or on the right tail of the distribution. Hence, the tests under the twCRPS, CL, CSL, PWL and WH scoring rules use the indicator weight function $w(z) = \ind\{z \ge r\}$ in all simulations.
For the weighted Hyvärinen score, we approximate the weight function $w(z) = \ind\{z \ge r\}$ with the $C^1$-function
\begin{align*}
  \tilde{w}(y) &= 3y^2-2y^3, \quad   \tilde{w}'(y) = 6y(1-y), \quad  y\in (0,1),
\end{align*}
(and $\tilde{w}(y)=\tilde{w}'(y)=0, y\le 0, \, \tilde{w}(y)=1, \tilde{w}'(y)=0, y\ge 1$), shifted and scaled to the interval $(r-\delta,r+\delta)$, with $\delta$ chosen as $0.5$.

We either use the fix sample size $n=100$, or we vary the sample size with the threshold value $r$ in such a way that under the standard normal distribution the expected number, $c = 10$, of observations in the relevant region $[r,\infty)$ remains constant. \\

\vspace{3mm}
{\bf Scenario A1:} Forecast 1: $\Phi$  vs. Forecast 2: $F_{hlt}$.

Here, $\Phi$ denotes the cumulative distribution function (cdf) of the standard normal distribution, and $F_{hlt}$  a piecewise defined distribution with continuous density and heavy left tail, consisting of a scaled $t_4$-distribution on $(-\infty,0]$ and a standard normal distribution on $(0,\infty)$. Clearly, for positive values of $r$, $\Phi$  and $F_{hlt}$ coincide. \\

Figure \ref{fig:scenarioA1-plot} shows the proportion of rejections of the null hypothesis of equal predictive performance in favor of $\Phi$, as a function of the threshold value $r$ in the weight function. For $r<0$, rejections in favor of the standard normal distribution represent true power, but if one is interested in the region $[r,\infty)$ for positive $r$, both forecasts are identical, and neither of them should be rejected. Results for the Diebold-Mariano test and the Wilcoxon signed-rank test for sample size $n=100$ are shown in the left and right panel, respectively. \\

\begin{figure}
\begin{center}
\includegraphics[scale=0.6]{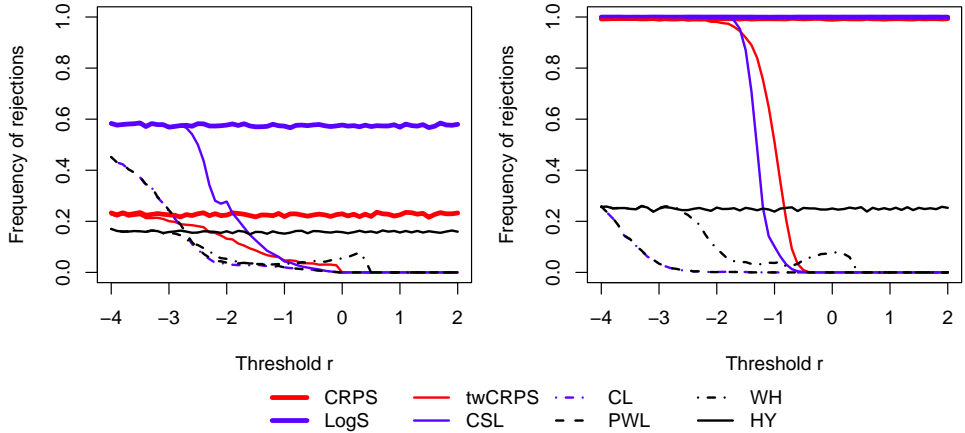}
\caption{Scenario A1. The null hypothesis of equal predictive performance of $\Phi$ and $F_{hlt}$ is tested under a standard normal population. The panels show the frequency of rejections in two-sided Diebold-Mariano test (left) or in one-sided Wilcoxon signed-rank test (right) in favor of $\Phi$ for $n=100$.}
\label{fig:scenarioA1-plot}
\end{center}
\end{figure}

Let us first look at the non-weighted scoring rules. They have rather different rejection frequencies when using the Diebold-Mariano test, with LogS well above CRPS and HY. Using the Wilcoxon signed-rank test, HY has a much lower rejection frequency than the other two non-weighted scoring rules.

Clearly, for large negative values of $r$, the rejection frequencies of CSL, CL and PWL coincide with those of LogS, but the rejection frequencies of CL and PWL, which are nearly identical, decrease faster to zero than for CSL. The WH score reaches zero at 0.5 (instead of 0) due to the specific choice of smooth weight function.

Having in mind that, in this scenario, the goal is to detect differences in predictive performance as long as $r<0$, but not for $r>0$, CSL is the preferable scoring rule using the Diebold-Mariano test, whereas CSL and twCRPS are preferable in case of the nonparametric test.

\vspace{3mm}
{\bf Scenario A2:} Forecast 1: $F_{hlt}$ from Scenario A1 vs. Forecast 2, a piecewise defined distribution $F_{hrt}$ with continuous density and heavy right tail, consisting of a standard normal distribution on $(-\infty,0]$ and a scaled $t_4$-distribution on $(0,\infty)$.

Figure \ref{fig:scenarioA2-plot} shows the proportion of rejections of the null hypothesis of equal predictive performance in favor of $F_{hlt}$, as a function of the threshold value $r$ in the weight function. For $r=-\infty$, both forecasts have the same distance from the (true) standard normal distribution, and neither of them should be rejected in favor of the other. However, for $r>0$, Forecast 1 coincides with $\Phi$, and Forecast 2 should be rejected. Results for the Diebold-Mariano test and the Wilcoxon signed-rank test for sample size $n=100$ are shown in the left and right panel, respectively. \\

\begin{figure}
\begin{center}
\includegraphics[scale=0.6]{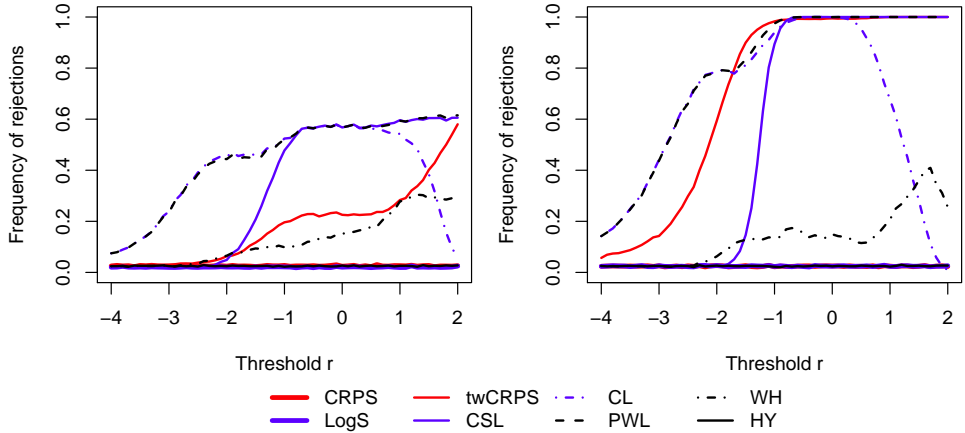}
\caption{Scenario A2. The null hypothesis of equal predictive performance of $F_{hlt}$ and $F_{hrt}$ is tested under a standard normal population. The panels show the frequency of rejections in two-sided Diebold-Mariano test (left) or in one-sided Wilcoxon signed-rank test (right) in favor of $F_{hlt}$ for $n=100$.} \label{fig:scenarioA2-plot}
\end{center}
\end{figure}

As one would expect, the rejection frequencies in favor of $F_{hlt}$ of all non-weighted scoring rules are around 0.025.
CL and PWL show the fastest increase of the rejection frequencies; however, CL decreases to zero for large positive values of $r$. This is due to the fact that the effective sample size, i.e. the number of observations exceeding $r$ becomes very small with increasing threshold.

In this scenario, since a reasonable forecast should not detect differences in predictive performance for large negative values of $r$ ($r<-4$, say), but should detect differences for mildly negative and positive values of $r$, PWL is the preferable scoring rule. It is remarkable that the rejection rate of CSL increases faster than that of twCRPS using the Diebold-Mariano test, but vice versa in case of the Wilcoxon signed-rank test.

\vspace{3mm}
{\bf Scenario B:} Denote the cdf of a normal distribution with mean $\mu$ and standard deviation $\sigma$ by $\Phi_{\mu,\sigma}$. Let
\begin{align*}
G(x) &= \Phi_{0,1/2}(x) \, \Phi(x) + \left(1-\Phi_{0,1/2}(x) \right) F_4(x), \\
H(x) &= \left(1-\Phi_{0,1/2}(x)\right) \Phi(x) + \Phi_{0,1/2}(x) \, F_4(x),
\end{align*}
where $F_4$ denotes the distribution function of the $t$-distribution with 4 degrees of freedom.
Figure \ref{fig:diff-plot-abs} shows the differences of the cdf's $\Phi-G$ and $\Phi-H$ (left) and the relative differences $(\Phi-G)/(1-\Phi(r))$ and $(\Phi-H)/(1-\Phi(r))$ (right).


\begin{figure}
\begin{center}
\includegraphics[scale=0.6]{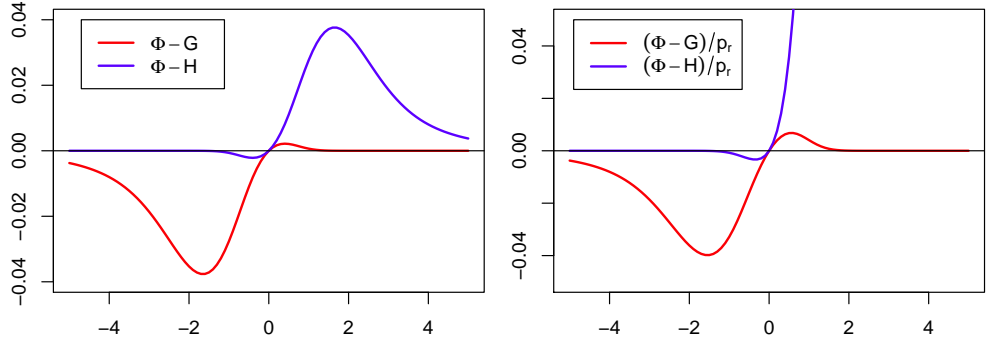}
\caption{Scenario B. Differences of cdf's $\Phi-G$ and $\Phi-H$ (left) and relative differences of cdf's $(\Phi-G)/p_r$ and $(\Phi-H)/p_r$ (right), where $p_r=1-\Phi(r)$.}
\label{fig:diff-plot-abs}
\end{center}
\end{figure}

\vspace{3mm}
In Scenario B, we consider Forecast 1: $G$  vs. Forecast 2: $H$.

Figure \ref{fig:scenarioB-plot1} shows the proportion of rejections of the null hypothesis of equal predictive performance in favor of $G$, as a function of the threshold value $r$ in the weight function for the Diebold-Mariano test (left panels) and the Wilcoxon signed-rank test (right panels). The upper (lower) panels show the results for fixed (varying) sample size.
If one is only interested in the region $[r,\infty)$ for larger values of $r$, forecast $G$ is close to $\Phi$; hence, $H$ should be rejected.

\begin{figure}
\begin{center}
\includegraphics[scale=0.59]{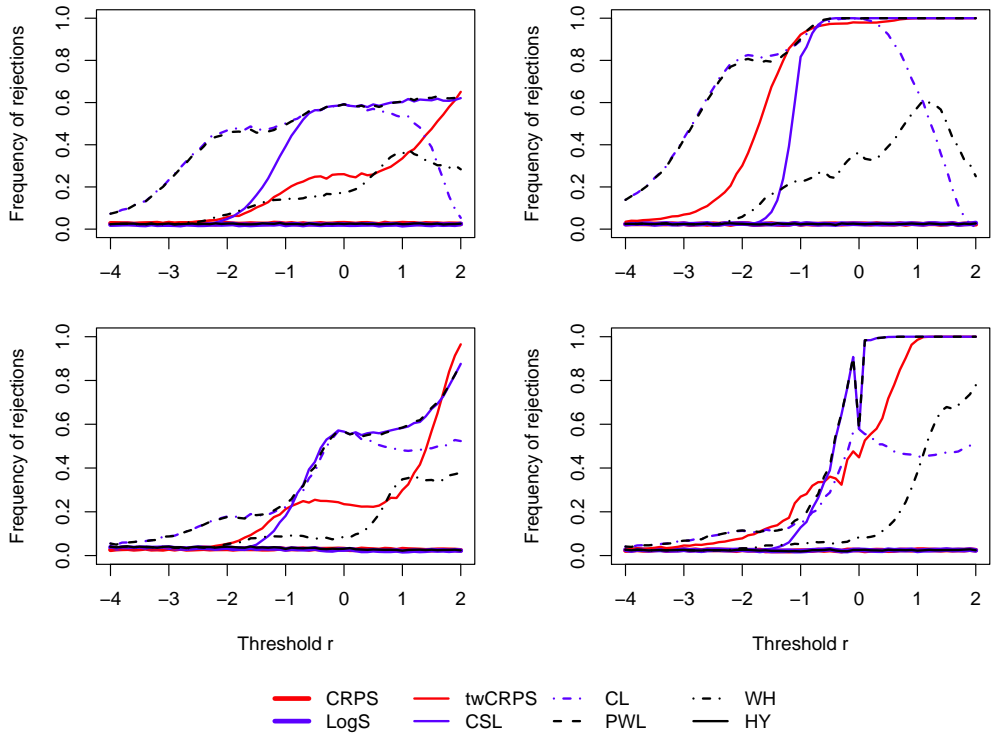}
\caption{Scenario B. The null hypothesis of equal predictive performance of $G$ and $H$ is tested under a standard normal population. The upper panels show the frequency of rejections in two-sided Diebold-Mariano tests (left) or in one-sided Wilcoxon signed-rank test (right) in favor of $G$ for $n=100$. The lower panels show the corresponding plots with $n$ varying in such a way that the expected number of observations in $[r,\infty)$ is 10 under the standard normal distribution.}
\label{fig:scenarioB-plot1}
\end{center}
\end{figure}

Qualitatively, the right panels of Figure \ref{fig:scenarioB-plot1} coincide with the corresponding left panels, but the right panels have much higher rejection frequencies.  The upper panels look very similar to the corresponding panels of Figure \ref{fig:scenarioA2-plot}. As for the lower left panel, the rejection frequencies in favor of $G$ finally increase for all weighted scoring rules if the number of observations in the region of interest remains constant.

Finally, Figure \ref{fig:scenarioB-plot2} shows the proportion of rejections in favor of $H$ for the Diebold-Mariano test. Here, the left (right) panel shows the results for fixed (varying) sample size. Rejection frequencies in favor of $H$ are (approximately) $2.5\%$ for the unweighted scoring rules, they are very low for CL and PWL, and decrease from 0.025 to zero for CSL, twCRPS and WH.

\begin{figure}
\begin{center}
\includegraphics[scale=0.60]{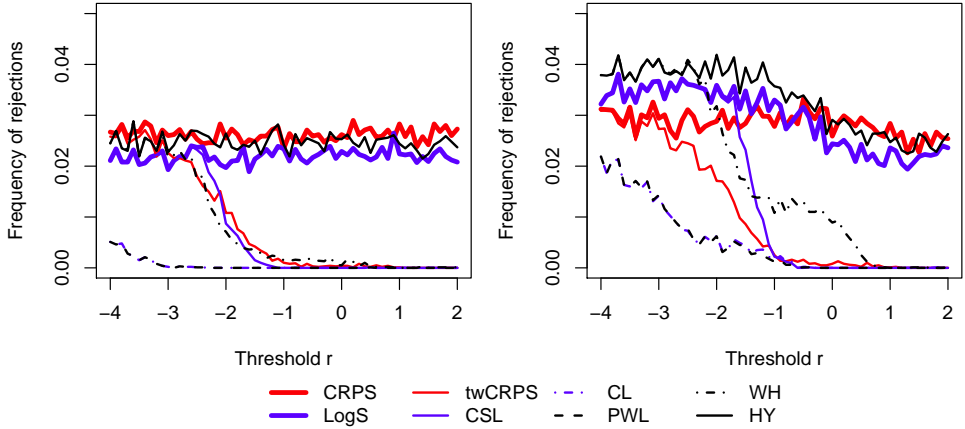}
\caption{Scenario B. The null hypothesis of equal predictive performance of $G$ and $H$ is tested under a standard normal population. The panels show the frequency of rejections in two-sided Diebold-Mariano tests in favor of $H$ for $n=100$ (left) and varying $n$ (right). }
\label{fig:scenarioB-plot2}
\end{center}
\end{figure}



\section{Empirical illustration}\label{sec:app}

We apply the proposed forecasting rules to two time series of daily log returns $y_t = \ln(P_t/P_{t-1})$, where $P_t$ is the closing price on day $t$, adjusted for dividends and splits. We consider S\&P 500 and Deutsche Bank AG log-returns for a sample period running from January 1, 2009 until December 31, 2016, giving a total of 2005 and 2025 observations (source: Yahoo finance). Since Yahoo finance data for Deutsche Bank partially includes holidays, we removed all days with zero trading volume.

We define three forecast methods based on the following GARCH(1,1) model,
\begin{equation}
y_t = µ + \sigma_t z_t, \qquad \sigma_t^2 = \omega + \alpha_1(y_{t-1} - µ)^2 + \beta_1 \sigma_{t-1}^2,
\end{equation}
using normal, t and skew-t distributions for the innovations to account for leptokurtosis and/or skewness. Since a typical finding in empirical applications of GARCH models is that a normal distribution for $z_t$ does not fully account for the kurtosis observed in stock returns, we may expect that the forecast with $t$-distributed innovations gives better density forecasts.

To evaluate the three forecast methods, we use one-step-ahead density forecasts with a rolling window scheme for parameter estimation done by maximum likelihood method using R and the R package rugarch (\citet{ghalanos}, \citet{rcoreteam}). The length of the estimation window is set to be 500 observations, so that the number of out-of-sample observations is equal to 1505 and 1525. For comparing the density forecasts' accuracy we apply the Diebold-Mariano test based on several weighted and unweighted scoring rules. We use the threshold weight function $w(y)=1\{y \leq r\}, r=-1,0,$ and $w(y)=1\{y \geq r\}, r=0,1$, and hence concentrate either on losses or on gains when using the weighted scoring rules. The score difference is computed by subtracting the score of the normal GARCH density forecast from the score of the $t$-GARCH density forecast, so that positive values indicate better predictive ability of the forecast method based on Student-$t$ innovations, and similarly for normal vs. $t$ and $t$ vs. skew-$t$ innovations. The results for the S\&P 500 and Deutsche Bank AG can be found in Tables \ref{table2} and \ref{table3}, respectively.

\begin{table}
\centering{
\begin{tabular}{cc|cccc}   \hline
& $w(z)$ & $1\{z\leq -1\}$ & $1\{z\leq 0\}$ & $1\{z\geq 0\}$ & $1\{z\geq 1\}$ \\
& proportion &  0.12 & 0.45 & 0.56 & 0.14 \\ \hline
& CRPS & 1.26 & 1.26 & 1.26 & 1.26 \\
& LogS & 3.31 & 3.31 & 3.31 & 3.31 \\
normal garch & twCRPS & 0.60 & 0.56 & 1.86 & -0.14 \\
vs. t-garch & CSL & 2.61 & 2.97 & 0.85 & -1.77 \\
&  CL & 2.63 & 3.24 & 1.62 & -3.85 \\
&  PWL & 2.62 & 3.09 & 1.30 & -2.01 \\ \hline
& CRPS & 1.78 & 1.78 & 1.78 & 1.78 \\
&  LogS & 3.53 & 3.53 & 3.53 & 3.53 \\
normal garch &  twCRPS & 1.59 & 1.40 & 1.26 & 0.44 \\
vs. skew-t-garch &  CSL & 2.91 & 3.24 & 0.71 & 0.63 \\
&  CL & 2.56 & 3.54 & 1.51 & -0.65 \\
&  PWL & 2.89 & 3.38 & 1.19 & 0.56 \\ \hline
&  CRPS & 1.22 & 1.22 & 1.22 & 1.22 \\
&  LogS & 1.38 & 1.38 & 1.38 & 1.38 \\
t-garch &  twCRPS & 1.49 & 2.44 & -0.28 & 1.08 \\
vs. skew-t-garch &  CSL & 2.04 & 1.71 & 0.04 & 2.79 \\
&  CL & 0.41 & 1.97 & 0.31 & 3.45 \\
&  PWL & 2.02 & 1.84 & 0.21 & 2.91 \\  \hline
\end{tabular}
\caption{\label{table2} $t$-statistics for Diebold-Mariano test for equal predictive accuracy for S\&P 500.
Positive values indicate superiority of forecasts from the second method,
while negative values indicate superiority of forecasts from the first method.}
}
\end{table}

On the whole, forecasts for the S\&P 500 returns using a $t$ or skew-$t$ GARCH model are superior to a normal GARCH model; using weighted scoring functions, we see that this holds especially for losses, but only to a lesser extent for gains. In particular, the $t$ GARCH model seems to be inferior to the normal GARCH for the threshold weight function $1\{z\geq 1\}$. As can be seen in the lower panel of Table \ref{table2}, results are less clear cut between $t$ and skew-$t$ GARCH density forecasts depending on the weight function, with an overall advantage for the skew-$t$ GARCH model.

For the Deutsche Bank returns, $t$ and skew-$t$ GARCH density forecasts are generally superior to a normal GARCH model for all (weighted and unweighted) scoring functions, but again this holds to a lesser extent for gains, as can be seen in Table \ref{table3}. The lower panel shows that there is no significant overall difference between $t$ and skew-$t$ GARCH density forecasts; however, the skew-$t$ GARCH model is significantly better for predicting losses whereas the $t$ GARCH model is clearly superior for predicting gains.

\begin{table}
\centering{
\begin{tabular}{cc|cccc}   \hline
& $w(z)$ & $1\{z\leq -1\}$ & $1\{z\leq 0\}$ & $1\{z\geq 0\}$ & $1\{z\geq 1\}$ \\
& proportion &  0.30 & 0.50 & 0.50 & 0.32 \\ \hline
&  CRPS & 1.10 & 1.10 & 1.10 & 1.10 \\
&  LogS & 2.30 & 2.30 & 2.30 & 2.30 \\
normal garch &  twCRPS & 0.70 & 0.41 & 0.87 & 1.06 \\
vs. t-garch &  CSL & 1.67 & 1.75 & 1.42 & 1.76 \\
&  CL & 1.72 & 1.80 & 1.52 & 2.00 \\
&  PWL & 1.67 & 1.78 & 1.47 & 1.84 \\ \hline
&  CRPS & 0.84 & 0.84 & 0.84 & 0.84 \\
&  LogS & 2.10 & 2.10 & 2.10 & 2.10 \\
normal garch &  twCRPS & 1.10 & 0.61 & 0.30 & 0.52 \\
vs. skew-t-garch &  CSL & 1.96 & 1.90 & 0.57 & 1.19 \\
&  CL & 1.95 & 2.02 & 0.81 & 1.48 \\
&  PWL & 1.95 & 1.96 & 0.69 & 1.28 \\  \hline
&  CRPS & -0.54 & -0.54 & -0.54 & -0.54 \\
&  LogS & -0.39 & -0.39 & -0.39 & -0.39 \\
t-garch &  twCRPS & 1.38 & 1.06 & -1.67 & -1.44 \\
vs. skew-t-garch &  CSL & 2.38 & 1.45 & -1.93 & -1.33 \\
&  CL & 2.09 & 1.87 & -1.67 & -1.26 \\
&  PWL & 2.30 & 1.70 & -1.83 & -1.32 \\  \hline
\end{tabular}
\caption{\label{table3} $t$-statistics for Diebold-Mariano test for equal predictive accuracy for Deutsche Bank AG.
Positive values indicate superiority of forecasts from the second method,
while negative values indicate superiority of forecasts from the first method.}
}
\end{table}

A visual inspection of the distribution of the residuals sheds some light on the previous results. Since the estimates for $\mu,\omega,\alpha_1$ and $\beta_1$ are very similar for the three models, the resulting empirical distributions of the residuals are visually nearly indistinguishable. Hence, Fig. \ref{fig:example-densities} only shows the empirical density of the residuals under normality assumption together with the theoretical densities of the normal, the fitted $t$-distribution (with shape parameter 8.4) and the fitted skew-$t$-distribution (with shape and skewness parameter 8.5 and 0.94, respectively) for the Deutsche Bank return series.

At first sight, the empirical density looks fairly symmetric, and all three distributions seem to fit the tails quite well, whereas the normal density is not sufficiently peaked in the center. Hence, it is clear that the normal distribution based forecasts will be inferior on the whole. Looking more closely, one actually finds regions in the right tail where the normal distribution fits better than $t$ and skew-$t$; thus, the advantage of the latter diminishes. In the center, the skew-$t$ seems to yield a better fit than the $t$ distribution for values smaller than zero, and vice versa for positive values, which may explain the results of the comparison between $t$ GARCH and skew-$t$ GARCH shown in the lower panel of Table \ref{table3}.

\begin{figure}
\begin{center}
\includegraphics[scale=0.60]{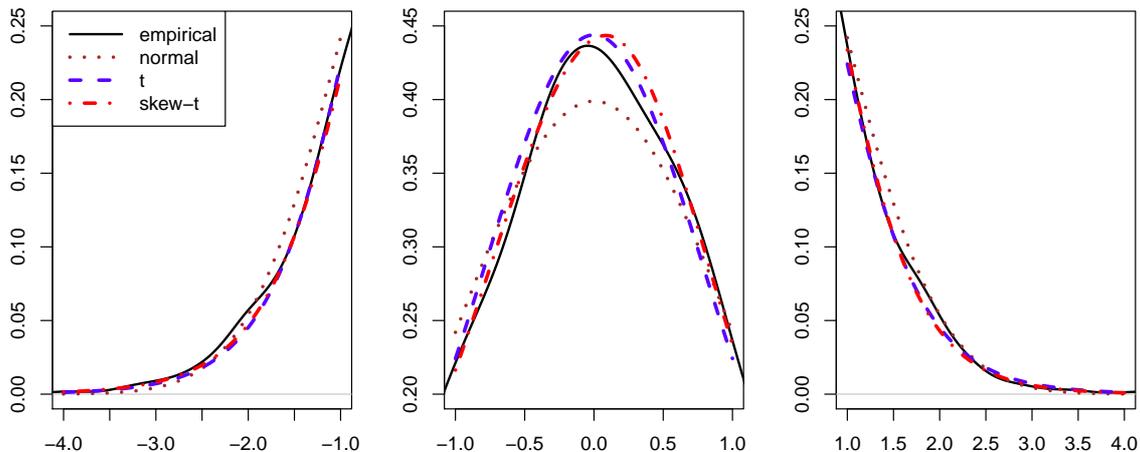}
\caption{Empirical and theoretical density functions of the residuals of a GARCH(1,1)-model fitted to the Deutsche Bank return series. For better visibility, left tail, center and right tail of the distribution are displayed in separate panels. }
\label{fig:example-densities}
\end{center}
\end{figure}

\section{Discussion and conclusions}\label{sec:discuss}

\citet{lerch} discuss the so-called {\sl forecasters dilemma}, in that forecasts are often only evaluated in case that extreme events actually occur. They point out that such a restriction of forecast evaluation to subsets of the available observations has highly unwanted effects, and it discredits even the best possible forecast, that is the true  conditional distribution.

Weighted scoring rules which remain proper are a valid decision-theoretic tool for emphasizing regions of interest, but the analysis and the simulations in \citet{lerch} cast doubts on their usefulness as compared to ordinary, non-weighted scoring rules. We show in a test-theoretical framework that weighted scoring rules indeed can be useful if interest is restricted to a subset $A$ of the potential observational domain. For example, if a forecast performs well on $A$ but poorly on $A^c$, it will still be found useful by localizing proper weighted scoring rules which focus on $A$.

The weighted scoring rule leading to optimal one-sided tests in an i.i.d.~setting is the censored likelihood rule as proposed in \citet{diks}. However, more generally other weighted scoring rules may be used as well, and our simulations show that the penalized likelihood score by \citet{pelenis}, which also has desirable theoretical properties such as being preference preserving, performs well in practice.

\section{Proofs}\label{sec:proofs}

\small

\begin{proof}[{\sl Proof of Theorem \ref{lem:diks}}]
First,
\begin{align*}
\widehat S (p,q;w) = \int q_w(x)\, S(p_w,x)\, d\mu(x) \, \int q w,
\end{align*}
so that $\widehat S$ is also a proper scoring rule for each $w$, and it is localizing since it only depends on $p_w$ and thus on $p$ only on $\{w>0\}$.
Further, if $S$ is strictly proper, then $\widehat S (p,q;w) = \widehat S (q,q;w)$ implies that $p_w = q_w$ $\mu$-a.e. Finally, observe that the densites $p$ and $q$ are proportional on $w>0$ $\mu$-a.e.~if and only if $p_w = q_w$ $\mu$-a.e.
This concludes the proof.
\end{proof}

\begin{proof}[{\sl Proof of Theorem \ref{th:basicresult}}]
The rule $S_\s$ is apparently localizing w.r.t.~$p$ since it depends only on $\int pw$. Further, it is proper since
\begin{equation}\label{eq:propertyprop}
S_\s(p,q;w) - S_\s(q,q;w) = \s\big(\int pw, \int qw \big) - \s\big(\int qw, \int qw \big) \geq 0,
\end{equation}
where we used the notation
\[ \s(\cp,\cq) = \cq \s(\cp,1) + (1-\cq) \s\big(\cp,0\big).\]
Now, as a sum of two locally proper scoring rules the rule $\widehat S$ is also a locally proper scoring rule.
Further, if
\[ \widehat S(p,p;w) = \widehat S(p,q;w),\]
then necessarily $S(p,p;w) =  S(p,q;w)$, which implies that $p = c\,q$ on $w>0$, and also that $S_\s(p,p;w) =  S_\s(p,q;w)$, which by (\ref{eq:propertyprop}) and the fact that $\s$ is strictly proper implies that $\int pw = \int qw$. But since we assume $\int qw \not=0$ and $\int pw \not=0$, we get for the proportionality constant that $c=1$ and hence $p=q$ on $w>0$, so that $\widehat S$ is  strictly locally proper.
\end{proof}
\begin{proof}[{\sl Proof of Proposition \ref{prop:wcrps}}]
If $p=q$ Lebesgue-a.e.~on $[r, \infty)$, then
\[ 1- F_p(z) = \int_z^\infty p(u)\, du = \int_z^\infty q(u)\, du = 1- F_q(z), \quad z \geq r,\]
and hence also $F_p(z) = F_q(z)$, $z \geq r$. Therefore
\[ S\big(p,x;1_{[r, \infty)}\big) = \int_{r}^{\infty} \big(F_p(z) - 1\{x \leq z\} \big)^2\,  dz
\]
is localizing. A computation shows that
\[ S\big(p,q;1_{[r, \infty)}\big) - S\big(p,p;1_{[r, \infty)}\big) = \int_{r}^{\infty} \big(F_p(z) - F_q(z) \big)^2\,  dz,
\]
so that the scoring rule is strictly locally proper.
\end{proof}

For the proofs of the results in Section \ref{sec:testing}, we introduce the following censoring mechanism.
Fix an $\bar a \in A^c$ and let
$ \tilde \cX  = A \cup \{ \bar a\}$, $\tilde \mu (B) = \mu(A \cap B) + \delta_{\bar a}(B)$ for $B \subset \tilde \cX  $ measurable, where $\delta_{\bar a}$ is the Dirac measure at ${\bar a}$, and for $j=0,1$ let
\begin{align*}
\tilde p_j(x) & = \left\{\begin{array}{cc} p_j(x), & \quad x \in A,\\  P_j(A^c),& \quad x = \bar a ; \end{array} \right.
\end{align*}
these are densities w.r.t.~$\tilde \mu$, with associated probability measures $\tilde P_j$. Consider the testing problem on $ \tilde \cX^n$ with simple null - and alternative hypotheses,
\begin{equation}\label{eq:testingproblem2}
 \tilde H_0: \tilde P= \tilde P_0  \quad \text{vs.} \quad H_1: \tilde P= \tilde P_1 ,
\end{equation}
which determine the product densities on $ \tilde \cX^n$. By sufficiency of the order statistics we may assume that tests $\phi(x_1, \ldots x_n)$ on $\cX^n$ and $\tilde \phi(x_1, \ldots x_n)$ on $\tilde \cX^n$ are symmetric in the arguments.
\begin{lemma}\label{lem:mainlemtesting}
1. There is a one-to-one correspondence between tests $ \phi: \cX^n\to [0,1]$ which are constant in arguments varying in $A^c$, given fixed arguments in $A$, and tests $ \tilde \phi: \tilde \cX^n\to [0,1]$, which is given by
\[ \phi(x_1, \ldots, x_n) = \tilde \phi(x_1, \ldots, x_j, \bar a, \ldots, \bar a),\quad \text{ if }x_1, \ldots, x_j \in A,\ x_{j+1}, \ldots, x_n \in A^c,\,\]
$0 \leq j \leq n$, and extended by symmetry. \\
2. Given $\alpha \in (0,1)$, the test $\phi_{cen}$ in (\ref{eq:thescoretest}) of level $\alpha$ when using the censored likelihood rule (\ref{eq:censlike}) corresponds to the Neyman-Pearson test $\tilde \phi_{NP}$ for (\ref{eq:testingproblem2}) of level $\alpha$.\\
3. If a test $ \phi: \cX^n\to [0,1]$ is constant in arguments varying in $A^c$ for fixed arguments in $A$ for $\mu^{\otimes n}$-a.e.~$(x_1, \ldots, x_n) \in \cX^n$, then its power function is constant on $H_0$ and on $H_1$ in (\ref{eq:testingproblem}). If $\phi$ is redefined to satisfy the requirement everywhere, the values of the power function do not change, and are equal to those of the corresponding test $\tilde \phi$ in 1.~for $\tilde H_0$ and $\tilde H_1$ in (\ref{eq:testingproblem2}).\\ 	
4. For a test $\phi: \cX^n \to [0,1]$, the test $\bar \phi: \tilde \cX \to [0,1]$ defined by
\begin{equation}\label{eq:theavaragedtest}
 \bar \phi(x_1, \ldots, x_n) = \frac{1}{[\mu(A^c)]^{n-j}}\, \int_{A^c}\, \ldots \int_{A^c} \phi(x_1, \ldots, x_j, z_{j+1}, \ldots, z_n)\, d \mu(z_{j+1})\, \ldots \, d\mu( z_{n}),
\end{equation}
where $0 \leq j \leq n$, $x_1, \ldots, x_j \in A$, $x_{j+1}, \ldots, x_n \in A^c,$
and extended by symmetry, does not vary with arguments lying in $A^c$ given a set of fixed arguments in $A$ as described in 1. Its power function is constant on both $H_0$ and on $H_1$, with values corresponding to the power function of $\phi$ for the product density of
\begin{equation}\label{eq:hypotheses}
q_j(x) =  1_A\, p_j(x) + 1_{A^c}\, \frac{P_j(A^c)}{\mu(A^c)}\quad \text{under } H_j,\quad j=0,1.
\end{equation}
\end{lemma}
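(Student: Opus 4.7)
The whole lemma says that the censoring construction $(\cX,\mu)\leadsto(\tilde\cX,\tilde\mu)$ captures exactly the information retained by a test that depends on its arguments only through their $A$/$A^c$ classification and the values in $A$. All four parts boil down to the same symmetric decomposition of expectations by the number of arguments lying in $A$, together with the observation that $\tilde p_j$ coincides with $p_j$ on $A$ and places mass $P_j(A^c)$ on the single atom $\bar a$.

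For (1) I would define the bijection explicitly as in the statement: extend a symmetric $\tilde\phi$ to $\cX^n$ by plugging $\bar a$ for any argument in $A^c$; restrict a symmetric $\phi$ (constant in $A^c$ arguments) to $\tilde\cX^n$. Symmetry makes both maps well-defined, and they are mutual inverses by inspection. For (2), with $w=1_A$ a direct computation gives $S^{CSL}(p_j,x;1_A)=-1_A(x)\log p_j(x)-1_{A^c}(x)\log P_j(A^c)$, so the test statistic in (\ref{eq:scorediff}) becomes
\[
T(x_1,\ldots,x_n)=\sum_{k=1}^n\Bigl[1_A(x_k)\log\tfrac{p_1(x_k)}{p_0(x_k)} + 1_{A^c}(x_k)\log\tfrac{P_1(A^c)}{P_0(A^c)}\Bigr],
\]
which is constant in $A^c$ arguments and, after substituting $\bar a$ for any such argument, is precisely the log likelihood ratio $\log\bigl(\prod\tilde p_1(y_k)/\prod\tilde p_0(y_k)\bigr)$ on $\tilde\cX^n$. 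Hence $\phi_{cen}$ lies in the scheme of (1) and corresponds to the Neyman--Pearson test of level $\alpha$ for (\ref{eq:testingproblem2}); the randomization in (\ref{eq:thescoretest}) translates verbatim into the NP randomization at the critical value.

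For (3) I would use the symmetric expansion
\[
E_p[\phi] = \sum_{k=0}^n\binom{n}{k}\int_{A^k\times(A^c)^{n-k}}\phi\,\prod_i p(x_i)\,d\mu^{\otimes n}.
\]
If $\phi$ is constant in $A^c$ arguments, the inner integration on $(A^c)^{n-k}$ pulls out a factor $P(A^c)^{n-k}$ and leaves $\phi(x_1,\ldots,x_k,\bar a,\ldots,\bar a)$ integrated against $\prod p(x_i)$ on $A^k$; the result depends on $p$ only through $p\,1_A$ and $P(A^c)$, both determined by $H_j$, so the power is constant on each hypothesis. Inspection then identifies the value with $E_{\tilde P_j^{\otimes n}}[\tilde\phi]$. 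The a.e.\ caveat is absorbed by redefining $\phi$ on a $\mu^{\otimes n}$-null set, which leaves every integral unchanged.

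For (4), $\bar\phi$ is manifestly independent of its $A^c$ arguments by construction, so part (3) applies. To identify the power: $q_j$ restricts to $p_j$ on $A$ and takes the constant value $P_j(A^c)/\mu(A^c)$ on $A^c$; feeding this into the symmetric expansion, the factor $\bigl(P_j(A^c)/\mu(A^c)\bigr)^{n-k}$ coming from $\prod q_j(z_i)$ on $(A^c)^{n-k}$ absorbs the normalizing $[\mu(A^c)]^{-(n-k)}$ in (\ref{eq:theavaragedtest}) and matches the $P_j(A^c)^{n-k}$ arising from $P(A^c)$ in $E_p[\bar\phi]$ under $H_j$; hence $E_p[\bar\phi]=E_{q_j^{\otimes n}}[\phi]$. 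The main obstacle is nothing conceptual, only the disciplined combinatorial bookkeeping that keeps the $\mu(A^c)$ and $P_j(A^c)$ factors aligned; one should also state explicitly the tacit assumption $0<\mu(A^c)<\infty$ that is needed for (\ref{eq:theavaragedtest}) to be well defined.
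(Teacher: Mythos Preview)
Your proposal is correct and follows essentially the same route as the paper: the symmetric decomposition of $E_p[\phi]$ by the number of arguments in $A$ is exactly the paper's device for parts 3 and 4, and your identification of $T$ with the log likelihood ratio on $\tilde\cX^n$ in part 2 matches the paper's computation verbatim. Your explicit remark that one needs $0<\mu(A^c)<\infty$ for (\ref{eq:theavaragedtest}) to make sense is a point the paper leaves tacit.
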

\begin{proof}[{\sl Proof of Lemma \ref{lem:mainlemtesting}}]
1.~is immediate. For 2., for the censored likelihood score we have that
\[ T(x_1\ldots, x_n) = \sum_{k=1}^n \big[1_A(x_k)\, \log\big(p_1(x_k)/p_0(x_k) \big) + 1_{A^c}(x_k)\, \log\big(P_1(A^c)/P_0(A^c) \big)\big],\]
which corresponds in 1.~to the logarithm of the likelihood ratio statistic for testing $\tilde p_0$ against $\tilde p_1$, the test statistic for the Neyman-Pearson test.\\
Ad 3.~By assumption on $\phi$ we have for $1 \leq j \leq n$ and $x_1, \ldots, x_j \in A$ that
\begin{equation}\label{eq:hilfnotdepend}
 \phi(x_1, \ldots, x_n) = \phi_j(x_1, \ldots, x_j)\quad \text{for } \mu^{\otimes (n-j)}-a.e.~x_{j+1}, \ldots, x_n \in A^c,
\end{equation}
where $\phi_j(x_1, \ldots, x_j)$ is defined by the value on the left side. Now consider the case of the null hypothesis, and suppose that $p$ is such that $\int 1_{A^c} p = P_0(A^c)$. Then
\begin{align}\label{eq:thepowernotdepend}
& \int \ldots \int\,  \phi(x_1, \ldots, x_n)\, \prod_{k=1}^n\, \big(1_A(x_k)\, p_0(x_k) + 1_{A^c}(x_k)\, p(x_k) \big)\, d\mu(x_1)\ldots d\mu(x_n)\notag\\
= & \sum_{j=0}^n\, \binom{n}{j}\, \int_A \ldots \int_A\,\prod_{k=1}^j\,  p_0(x_k)\,\Big( \int_{A^c} \ldots \int_{A^c}\,  \phi(x_1, \ldots, x_n)\notag\\
& \qquad \quad  \prod_{k=j+1}^n\, p(x_k)\, d\mu(x_{j+1})\ldots d\mu(x_n)\Big)\, d\mu(x_{1})\ldots d\mu(x_j)\notag\\
= & \sum_{j=0}^n\, \binom{n}{j}\, [P_0(A^c)]^{n-j}\, \int_A \ldots \int_A\,\phi_j(x_1, \ldots, x_j)\, dP_0(x_{1})\ldots dP_0(x_j),
\end{align}
which does not depend on $p$.

If we define $\tilde \phi: \cX^n \to [0,1]$ by
\[ \tilde \phi(x_1, \ldots, x_j, \bar a, \ldots, \bar a) = \phi_j(x_1, \ldots, x_j),\quad x_1, \ldots, x_j \in A,\]
and extended by symmetry, then a similar computation shows that
\begin{align*}
& \int_{\tilde \cX} \ldots \int_{\tilde \cX}\,  \tilde \phi(x_1, \ldots, x_n)\,  d\tilde P_0(x_1)\ldots d\tilde P_0(x_n)\\
= & \sum_{j=0}^n\, \binom{n}{j}\, [P_0(A^c)]^{n-j}\, \int_A \ldots \int_A\,\phi_j(x_1, \ldots, x_j)\, dP_0(x_{1})\ldots dP_0(x_j),
\end{align*}
which shows the second claim of 3.

Ad 4.~It is clear that $\bar \phi$ does not vary with arguments lying in $A^c$ given a set of fixed arguments in $A$, and its associated $\phi_j$ in (\ref{eq:hilfnotdepend}) is the right side of (\ref{eq:theavaragedtest}). Suppose that $p$ is such that $\int 1_{A^c} p = P_0(A^c)$, then (\ref{eq:thepowernotdepend}) equals
\begin{align*}
& \sum_{j=0}^n\, \binom{n}{j}\, \frac{[P_0(A^c)]^{n-j}}{[\mu(A^c)]^{n-j}}\, \int \ldots \int\,\prod_{k=1}^j\, \big(1_A(x_k)\, p_0(x_k)\big)\,\prod_{k=j+1}^n\,1_{A^c}(x_k)\\
& \qquad \quad   \phi(x_1, \ldots, x_n)\,\, d\mu(x_1)\ldots d\mu(x_n)\\
= & \int \ldots \int\, \bar \phi(x_1, \ldots, x_n)\, \prod_{k=1}^n\, q_0(x_k)\, d\mu(x_1)\ldots d\mu(x_n),
\end{align*}
and similarly under the alternative hypothesis.
\end{proof}
\begin{proof}[{\sl Proof of Theorem \ref{th:scoretesting}}]
Ad 1.~From the property (\ref{eq:locallyproper}) of a localizing weighted scoring rule, the resulting test $\phi$ in (\ref{eq:thescoretest}) for which $T$ in (\ref{eq:scorediff}) is based on $S$ satisfies property 3.~in Lemma \ref{lem:mainlemtesting}. Hence it has constant power function on $H_0$ and on $H_1$ in (\ref{eq:testingproblem}), and the value of the power function over $H_0$ is $\alpha$.

Ad 2.: This follows since the Neyman-Pearson test $\tilde \phi_{NP}$ is uniformly most powerful for (\ref{eq:testingproblem2}), together with Lemma \ref{lem:mainlemtesting}, parts 1., 2.~and 3.
\end{proof}
\begin{proof}[{\sl Proof of Theorem \ref{th:optimalitytest}}]
For the first statement, given any level-$\alpha$-test $\phi:\cX \to [0,1]$ for (\ref{eq:testingproblem2}), the associated test $\bar \phi$ in Lemma \ref{lem:mainlemtesting}, 4., also has level $\alpha$ and further satisfies $\rho(\bar \phi) \geq \rho(\phi)$, since the constant value of the power function of $\bar \phi$ on $H_1$ (see Lemma \ref{lem:mainlemtesting}, 3.) is a particular value of the power function of the test $\phi$.
The statement now follows from Lemma \ref{lem:mainlemtesting}, 1.,2.,3., and optimality of the Neyman-Pearson test for (\ref{eq:testingproblem2}).

The statement for $n=1$ follows from Lemma \ref{lem:mainlemtesting}, 1.~and 3., and Lemma \ref{lem:npcensored} below. This concludes the proof of the theorem.
\end{proof}
\begin{lemma}\label{lem:npcensored}
For any test $\phi: \cX \to [0,1]$ which has level $\alpha$ for (\ref{eq:testingproblem}), e.g.
\begin{equation}\label{eq:level}
 \sup_{p:\ p\, 1_A  = p_0\, 1_A} \Big(\int_A \phi p_0  + \int_{A^c} \phi p\Big) \leq \alpha,
\end{equation}
 the test
\[ \tilde \phi(x) = \phi(x) 1_A(x) + \beta 1_{A^c}(x).\]
where
\[ \beta = \min\Big(\big(\alpha - \int_A \phi p_0\big)/P_0(A^c),1\Big),\]
also has level $\alpha$ and is uniformly more powerful than $\phi$.
\end{lemma}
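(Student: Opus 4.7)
The plan is to prove the lemma in three steps: first use the level constraint on $\phi$ to extract a pointwise $\mu$-a.e.~bound $\phi \le \beta$ on $A^c$; second, compute the power of $\tilde\phi$ on $H_0$ and verify it is at most $\alpha$; and third, establish the power domination on $H_1$ by a direct pointwise argument on $A^c$. The crucial observation is that the composite null hypothesis (\ref{eq:testingproblem}) allows the density $p$ on $A^c$ to be chosen essentially freely subject only to $\int_{A^c} p = P_0(A^c)$, and this flexibility is exactly what converts the integral level constraint (\ref{eq:level}) into a pointwise bound.

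For the first step, let $M = \mathrm{ess\,sup}_{\mu, A^c}\, \phi \in [0,1]$. For each $\varepsilon > 0$ the set $B_\varepsilon = \{\phi > M - \varepsilon\} \cap A^c$ has positive $\mu$-measure, and by $\sigma$-finiteness of $\mu$ we may choose $C_\varepsilon \subset B_\varepsilon$ with $0 < \mu(C_\varepsilon) < \infty$. The function
\[ p_\varepsilon = 1_A\, p_0 + \frac{P_0(A^c)}{\mu(C_\varepsilon)}\, 1_{C_\varepsilon}\]
is a probability density satisfying $p_\varepsilon\, 1_A = p_0\, 1_A$, hence admissible in (\ref{eq:level}). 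Plugging it in yields $\int_A \phi\, p_0 + (M - \varepsilon)\, P_0(A^c) \le \alpha$, and sending $\varepsilon \downarrow 0$ gives $M \le (\alpha - \int_A \phi\, p_0)/P_0(A^c)$. Combined with the trivial bound $M \le 1$, this gives $M \le \beta$, so $\phi \le \beta$ $\mu$-a.e.~on $A^c$.

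The remaining two steps are then short. For any density $p$ with $p\, 1_A = p_j\, 1_A$, $j \in \{0,1\}$, one has $\int_{A^c} p = P_j(A^c)$, and
\[ \int \tilde\phi\, p = \int_A \phi\, p_j + \beta\, P_j(A^c),\]
which is constant in $p$. Taking $j=0$ and recalling the definition of $\beta$ (splitting the two cases $\beta < 1$ and $\beta = 1$) shows this value is at most $\alpha$, so $\tilde\phi$ has level $\alpha$. For any alternative density $p$ with $p\, 1_A = p_1\, 1_A$, the power difference
\[ \int \tilde\phi\, p - \int \phi\, p = \int_{A^c} (\beta - \phi)\, p \ge 0\]
is nonnegative by the first step and positivity of $p$, giving the uniform power domination.

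The main obstacle is the first step, that of translating the supremum-type condition (\ref{eq:level}) on $\phi$ into the pointwise bound $\phi \le \beta$ $\mu$-a.e.~on $A^c$. The construction of the extremal densities $p_\varepsilon$ via $\sigma$-finiteness of $\mu$ is the essential ingredient; once that bound is in hand, the rest of the lemma reduces to a two-line computation.
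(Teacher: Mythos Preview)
Your proof is correct and follows essentially the same route as the paper's: both construct extremal densities in $H_0$ concentrated on the part of $A^c$ where $\phi$ is large, thereby converting the level constraint into a pointwise bound $\phi \le \beta$ $\mu$-a.e.~on $A^c$, after which level and power domination are immediate. Your version is slightly more careful in one respect---by invoking $\sigma$-finiteness to pick $C_\varepsilon$ of finite measure you avoid the tacit assumption in the paper's argument that $\mu(\{\phi>\beta\}\cap A^c)<\infty$---and you organize the cases $\beta=1$ and $\beta<1$ uniformly via the essential supremum rather than splitting them, but the underlying idea is the same.
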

\begin{proof}[{\sl Proof of Lemma \ref{lem:npcensored}}]
To show that $\tilde \phi$ also has level $\alpha$, observe that any density $p$ under $H_0$ satisfies $\int 1_{A^c} \, p = P_0(A^c)$, hence for  such $p$,
\[ \int \tilde \phi p  = \int_A \phi \, p_0 + \beta\, P_0(A^c) \leq  \alpha\]
by the choice of $\beta$. Thus, $\tilde \phi $ has level $\alpha$.

If $\beta = 1$, then since $\phi(x) \leq 1 $ we have that $\phi(x) \leq \tilde \phi (x)$ for all $x$ and the statement is clear.

If $\beta < 1$, we show that still $\phi(x) \leq \beta$ $\mu$-a.e.~on $A^c$, which implies that $\phi \leq \tilde \phi$ $\mu$-a.e.~and hence $\tilde \phi$ is uniformly more powerful than $\phi$.

Suppose for a contradiction that
\begin{equation}\label{eq:inequality}
\mu\big(\{\phi > \beta\} \cap A^c \big) >0.
\end{equation}
Then the density
\[ p = p_0 1_A + \frac{P_0(A^c)}{\mu\big(\{\phi > \beta\} \cap A^c \big)}\, 1_{\{\phi > \beta\} \cap A^c}\]
satisfies $H_0$, but since $\beta < 1$,
\[ \int \phi p >  \int_A \phi p_0 + \beta P_0(A^c) = \alpha,\]
in contradiction to (\ref{eq:level}).
\end{proof}

\end{document}